\tikzset{every picture/.style={
scale=0.15,
font={\fontsize{8pt}{12}\selectfont},
baseline={(current bounding box.west)}
}} 
\theoremstyle{definition}
\newtheorem{definition}{Definition}[section] 
\theoremstyle{plain}
\newtheorem{proposition}[definition]{Proposition} 
 \newtheorem{lemma}[definition]{Lemma}		  
 \newtheorem{theorem}[definition]{Theorem}	  
 \newtheorem{corollary}[definition]{Corollary}	  
\newtheorem*{notation*}{Notation}
 \newcommand{\thistheoremname}{}
 \newtheorem*{generic*}{\thistheoremname}
 \newenvironment{freenamed*}[1]
   {\renewcommand{\thistheoremname}{#1}%
    \begin{generic*}}
   {\end{generic*}}
\theoremstyle{remark}
\newcommand{\abs}[3]{\lambda {#1}^{#2}. #3}
\newcommand{\app}[2]{#1 \, #2}	      
\newcommand{\FV}{\mathrm{FV}}
\newcommand{\plug}[2]{#1 \langle #2 \rangle}
\newcommand{\emptyCtxt}{\langle \cdot \rangle}
\newcommand{\rewritesfirst}{$_\rightarrowtriangle$}
\newcommand{\N}{\mathbb{N}}
\newcommand{\iffdef}{\overset{\text{def.}}{\iff}}
\newcommand{\twoheadmapsto}{
\mathrel{\ooalign{$\twoheadrightarrow$\cr%
\kern-.15ex\raisebox{.2ex}{\scalebox{1}[0.8]{$\shortmid$}}\cr}}}
\newcommand{\longtwoheadmapsto}{
\mathrel{\ooalign{$\longtwoheadrightarrow$\cr%
\kern-.15ex\raisebox{.2ex}{\scalebox{1}[0.8]{$\shortmid$}}\cr}}}
\newcommand{\longequal}{=\joinrel=}
\newcommand{\putAnchor}[2]{ 
\coordinate (#2) at #1;
}
\newcommand{\midAnchor}[3]{	
\coordinate (#3) at ($(#1)!0.5!(#2)$);
}
\newcommand{\transAnchor}[3]{	
\coordinate (#3) at ($(#1) + #2$);
}
\newcommand{\outline}{}		
\newcommand{\thickened}{ \draw [thick] \outline; }
\newcommand{\drawNode}[6]{	
\renewcommand{\outline}{
#2 ellipse [x radius=#3, y radius=#4]
}
\filldraw [fill=white] \outline node {#1};
\path [xscale=#3, yscale=#4] (-30:1) coordinate (OR);
\node at ($#2 + (OR)$) [anchor=north west, inner sep=0] {\tiny #5};
\putAnchor{#2}{#6}
\putAnchor{($#2 + (0,#4)$)}{#6I}
\putAnchor{($#2 + (0,-#4)$)}{#6O}
\putAnchor{($#2 + (-#3,0)$)}{#6L}
\putAnchor{($#2 + (#3,0)$)}{#6R}
\putAnchor{($#2 + (-#3,0)$)}{#6IL}
\putAnchor{($#2 + (#3,0)$)}{#6IR}
}
\newcommand{\drawRecNode}[6]{	
\renewcommand{\outline}{
[rounded corners] ($#2 + (-#3,-#4)$) rectangle ($#2 + (#3,#4)$)
}
\filldraw [fill=white] \outline;
\node at #2 {#1};
\path [xscale=#3, yscale=#4] (-30:1) coordinate (OR);
\node at ($#2 + (OR)$) [anchor=north west, inner sep=0.5pt] {\tiny #5};
\putAnchor{#2}{#6}
\putAnchor{($#2 + (0,#4)$)}{#6I}
\putAnchor{($#2 + (0,-#4)$)}{#6O}
\putAnchor{($#2 + (-#3,0)$)}{#6L}
\putAnchor{($#2 + (#3,0)$)}{#6R}
\putAnchor{($#2 + 0.5*(-#3,0) + (0,#4)$)}{#6IL}
\putAnchor{($#2 + 0.5*(#3,0) + (0,#4)$)}{#6IR}
}
\newcommand{\drawDummyNode}[3]{
\drawNode{$\sharp$}{#1}{1}{1}{#2}{Dummy#3}
}
\newcommand{\drawAxNode}[3]{	
\drawNode{$\mathsf{Ax}$}{#1}{1.4}{1}{#2}{Ax#3}
}
\newcommand{\drawCutNode}[3]{	
\drawNode{$\mathsf{Cut}$}{#1}{1.8}{1}{#2}{Cut#3}
}
\newcommand{\drawTensorNode}[3]{	
\drawNode{}{#1}{1}{1}{#2}{Tensor#3}
\draw ($#1 + (45:1)$)--($#1 + (225:1)$);
\draw ($#1 + (-45:1)$)--($#1 + (-225:1)$);
}
\newcommand{\drawParrNode}[3]{	
\drawNode{$\parr$}{#1}{1}{1}{#2}{Parr#3}
}
\newcommand{\drawDollarNode}[3]{	
\drawNode{$\$$}{#1}{1}{1}{#2}{Dollar#3}
}
\newcommand{\drawPrincipalNode}[3]{	
\drawNode{$\oc$}{#1}{1}{1}{#2}{Principal#3}
}
\newcommand{\drawAuxiliaryNode}[3]{	
\drawNode{$\wn$}{#1}{1}{1}{#2}{Auxiliary#3}
}
\newcommand{\drawDerelictionNode}[3]{	
\drawNode{$\mathsf{D}$}{#1}{1}{1}{#2}{Dereliction#3}
}
\newcommand{\drawWeakeningNode}[3]{	
\drawNode{$\mathsf{W}$}{#1}{1}{1}{#2}{Weakening#3}
}
\newcommand{\drawContractionNode}[6]{	
\drawRecNode{$\mathsf{C}_{#1}$}{#2}{#3}{#4}{#5}{Contraction#6}
}
\newcommand{\withUpToken}[1]{	
\fill (midpoint) circle [radius=0.5] node [#1] {$\uparrow$};
}
\newcommand{\withDownToken}[1]{	
\fill (midpoint) circle [radius=0.5] node [#1] {$\downarrow$};
}
\newcommand{\strikedOut}{
\node at (midpoint) [rotate=-60] {\tiny$\mid$};
}
\newcommand{\withLabel}[3]{	
\node [anchor=#1, inner sep=#2] at (label) {#3};
}
\newcommand{\near}{0.1em}
\newcommand{\far}{0.3em}
\newcommand{\drawCorneredConnection}[3]{ 
\draw [#1] let \p1 = ($(#3) - (#2)$) in
(#2)[rounded corners]--++(\x1,0.1*\y1) coordinate (label)
--(#3) coordinate [midway] (midpoint);
}
\newcommand{\drawCorneredOpen}[3]{ 
\draw [#1] let \p1 = #3 in
(#2)[rounded corners]--++(\x1,0.1*\y1) coordinate (label)
--($(#2) + #3$) coordinate [midway] (midpoint);
}
\newcommand{\drawStraightConnection}[3]{ 
\draw [#1] (#2)--(#3)
coordinate [midway] (label) coordinate [midway] (midpoint);
}
\newcommand{\drawStraightOpen}[3]{ 
\draw [#1] (#2)--($(#2) + #3$)
coordinate [midway] (label) coordinate [midway] (midpoint);
}
\newif\ifshowFigGenerators
\newif\ifshowFigBox
\newif\ifshowFigPassTransitions
\newif\ifshowFigRewriteTransitions
\newif\ifshowDefTranslationsTerm
\newif\ifshowLemTranslationComposition
\newif\ifshowFigTranslationTerms
\newif\ifshowFigTranslationEvalCtxts
\newif\ifshowDefSimulation
\newif\ifshowLemTranslationEvalCtxt
\newif\ifshowPrfDecomposeTranslationEvalCtxt
\newif\ifshowIllustrateSimulation
\title{The Dynamic Geometry of Interaction Machine:  \\
A Call-by-need Graph Rewriter}
\author{Koko Muroya and Dan Ghica\\ University of Birmingham, UK}
\date{}
\begin{document}

\maketitle

\begin{abstract}
 Girard's Geometry of Interaction (GoI), a semantics designed for linear logic
 proofs, has been also successfully applied to programming language semantics.
 One way is to use abstract machines that pass a token
 on a fixed graph along a path indicated by the GoI.
 These token-passing abstract machines are space
 efficient, because they handle duplicated computation by repeating
 the same moves of a token on the fixed graph.
 Although they can be adapted to obtain sound models with regard to the equational theories of various  evaluation strategies for the lambda calculus, it can be at the expense of significant time costs.
 In this paper we show a token-passing abstract machine that can implement
 evaluation strategies for the lambda calculus, with certified time efficiency.
 Our abstract machine, called the \textit{Dynamic GoI Machine} (DGoIM), rewrites the graph to avoid replicating computation, using the token to find the redexes.
 The flexibility of interleaving token transitions and graph rewriting allows the DGoIM to balance the
 trade-off of space and time costs.
 This paper shows that the DGoIM can implement call-by-need evaluation for the lambda calculus by
 using a strategy of interleaving token passing with as much graph rewriting as possible.
 Our quantitative analysis confirms that the DGoIM with this strategy of
 interleaving the two kinds of possible operations on graphs can be classified as
 ``efficient'' following Accattoli's taxonomy of abstract machines.
\end{abstract}

\section{Introduction}

\subsection{Token-passing Abstract Machines for $\lambda$-calculus}

Girard's Geometry of Interaction (GoI) \cite{Girard89GoI1} is a
semantic framework for linear logic proofs~\cite{Girard87LL}.
One way of applying it to programming language semantics is via ``token-passing'' abstract machines.
A term in the $\lambda$-calculus is evaluated by representing it
as a graph, then passing a token along a path indicated by the GoI.
Token-passing  GoI  decomposes higher-order
computation into local token actions, or low-level interactions of
simple components.
It can give strikingly innovative implementation techniques for
functional programs, such as Mackie's \textit{Geometry of Implementation} compiler
\cite{Mackie95}, Ghica's \textit{Geometry of
Synthesis} (GoS) high-level synthesis tool \cite{Ghica07GoS1}, and
Sch{\"{o}}pp's resource-aware program transformation to a low-level
language \cite{Schoepp14a}.
The interaction-based approach is also convenient for the complexity
analysis of programs, e.g.\
Dal Lago and Sch{\"{o}}pp's \textsc{IntML} type system of logarithmic-space
evaluation \cite{DalLagoS16}, and Dal Lago et al.'s linear dependent
type system of polynomial-time evaluation \cite{DalLagoG11,DalLagoP12}.

Fixed-space execution is essential for GoS, since in the case of digital circuits the memory footprint of the program must be known at compile-time, and fixed. Using a restricted version of the call-by-name language Idealised Algol~\cite{GhicaS11GoS3} not only the graph, but also the token itself can be given a fixed size. Surprisingly, this technique also allows the compilation of recursive programs~\cite{GhicaSS11GoS4}. The GoS compiler shows both the usefulness of the GoI as a guideline for unconventional compilation and the natural affinity between its space-efficient abstract machine and call-by-name evaluation. The practical considerations match the prior theoretical understanding of this connection~\cite{DanosR96}.

In contrast, re-evaluating a term by repeating its token actions
poses a challenge for call-by-value evaluation
(e.g.\ \cite{FernandezM02,Schoepp14b,HoshinoMH14,DalLagoFVY15}) because
duplicated computation must not lead to repeated evaluation.
Moreover, in call-by-value repeating token actions raises the
additional technical challenge of avoiding repeating any associated computational effects
(e.g.\ \cite{Schoepp11,MuroyaHH16,DalLagoFVY17}). A partial solution to this conundrum is to focus on the soundness of the equational theory, while deliberately ignoring the time costs~\cite{MuroyaHH16}.
However,
Fern{\'{a}}ndez and Mackie suggest that in a call-by-value scenario, the time efficiency of a
token-passing abstract machine could also be improved, by allowing a token to jump along a
path, even though a time cost analysis is not given~\cite{FernandezM02}.

For us, solving the the problem of creating a GoI-style abstract machine which computes efficiently with evaluation strategies other than call-by-name is a first step in a longer-range research programme. The compilation techniques derived from the GoI can be extremely useful in the case of unconventional computational platforms. But if GoI-style techniques are to be used in a practical setting they need to extend beyond call-by-name, not just correctly but also efficiently.

\subsection{Interleaving Token Passing with Graph Rewriting}
\label{sec:IntroductionInterleaving}

A token jumping, rather than following a path, can be seen as a simple form of short-circuiting that path, which is a simple form of graph-rewriting. This idea first occurs in Mackie's work as a compiler optimisation technique~\cite{Mackie95} and is analysed in more depth theoretically by Danos and Regnier in the so-called \textit{Interaction Abstract Machine}~\cite{DanosR96}. More general graph-rewriting-based semantics have been used in a system called \textit{virtual reduction}~\cite{DanosR93}, where rewriting occurs along paths indicated by GoI, but without any token-actions. The most operational presentation of the combination of token-passing and jumping was given by Fern{\'{a}}ndez and Mackie~\cite{FernandezM02}. The interleaving of token actions and rewriting is also found in Sinot's interaction nets \cite{Sinot05,Sinot06}. We can reasonably think of the DGoIM as their abstract-machine realisation.

We build on these prior insights by adding more general, yet still efficient, graph-rewriting facilities to the setting of a GoI token-passing abstract machine.  We call an abstract machine that interleaves token passing with graph rewriting the \textit{Dynamic GoI Machine} (DGoIM), and we define it as a state transition system with transitions for token passing as well as transitions for graph
rewriting. What connects these two kinds of transitions is the token trajectory through the graph, its path. By examining it,
the DGoIM can detect redexes and trigger rewriting actions.

Through graph rewriting, the DGoIM reduces sub-graphs visited by the token,  avoiding repeated token actions and improving time efficiency.
On the other hand, graph rewriting can expand a graph by e.g.\ copying sub-graphs, so space costs can grow.
To control this trade-off of space and time cost, the DGoIM has the flexibility of interleaving token passing with graph rewriting. Once the DGoIM detects that it has traversed a redex, it may rewrite it, but it may also just propagate the token without rewriting the redex.

As a first step in our exploration of the flexibility of this machine, we consider the
two extremal cases of interleaving.
The first extremal case is ``passes-only,'' in which the DGoIM never
triggers graph rewriting, yielding an ordinary token-passing abstract
machine.
As a typical example,
the $\lambda$-term $\app{(\abs{x}{}{t})}{u}$ is evaluated like this:\\[1.5ex]
 \begin{minipage}[c]{.2\hsize}
  \centering
  \begin{tikzpicture}[baseline=(current bounding box.west)]
   \draw [rounded corners, very thick] (-6.5,0) rectangle (-0.5,3)
   node [midway] {$\abs{x}{}{t}$};
   \draw [rounded corners, very thick] (0.5,0) rectangle (6.5,3)
   node [midway] {$u$};
   \fill (0,-2) circle [radius = 0.4];
   \draw [rounded corners, very thick]
   (0,-2)--(-3.5,-1.5)--(-3.5,0)
   (0,-2)--(3.5,-1.5)--(3.5,0)
   (0,-2)--(0,-4);
  \end{tikzpicture}
 \end{minipage} \hspace{.05\hsize} %
 \begin{minipage}[c]{.65\hsize}
  \begin{enumerate}
   \item A token enters the graph on the left at the bottom open edge.
   \item A token visits and goes through the left sub-graph $\lambda x.t$.
   \item Whenever a token detects an occurrence of the variable
	 $x$ in $t$, it traverses the right sub-graph $u$, then returns carrying the resulting value.
   \item A token finally exits the graph at the bottom open
	 edge.
  \end{enumerate}
 \end{minipage} \\[1.5ex]
Step 3 is repeated whenever term $u$ needs to be re-evaluated. This strategy of interleaving corresponds to call-by-name reduction.

The other extreme is ``rewrites-first,'' in which
the DGoIM interleaves token passing with as much, and as early, graph
rewriting as possible, guided by the token. This corresponds to both call-by-value and call-by-need reductions, the difference between the two being the trajectory of the token. 
In the case of call-by-value, the token will enter the graph from the bottom, traverse the left-hand-side sub-graph, which happens to be already a value, then visit sub-graph $u$ even before $x$ is used in a call. While traversing $u$, it will cause rewrites such that when the token exits, it leaves behind the graph of a machine corresponding to a value $v$ such that $u$ reduces to $v$. The difference with call-by-need is that the token will visit $u$ only when $x$ is encountered in $\lambda x.t$.
In both cases, if repeated evaluation is required then the sub-graph corresponding now to $v$ is copied, so that one copy can be further rewritten, if needed, while the original is kept for later reference.




\subsection{Contributions}

This work presents a DGoIM model for call-by-need, which can be seen as a case study of the flexibility achieved through controlled interleaving of rewriting and token-passing. This is achieved through a rewriting strategy which turns out to be as natural as the passes-only strategy is for implementing call-by-name. The DGoIM avoids re-evaluation of a sub-term by rewriting any sub-graph visited by a token so that the updated sub-graph represents the evaluation result, but, unlike call-by-value, it starts by evaluating the sub-graph corresponding to the function $\lambda x.t$ first. We chose call-by-need mainly because of the technical challenges it poses. Adapting the technique to \textit{call-by-value} is a straightforward exercise, and we discuss other alternative in the Conclusion. 

We analyse the time cost of the DGoIM with the rewrites-first
interleaving, using Accattoli et al.'s general methodology for quantitative analysis~\cite{AccattoliBM14,Accattoli16}. Their method cannot be used ``off the shelf,''
because the DGoIM does not satisfy one of the assumptions used in \cite[Sec.~3]{Accattoli16}.
Our machine uses a more refined transition system, in which several steps correspond to a single one in \textit{loc.\ cit.}.
We overcome this technical difficulty by building a weak simulation of Danvy and
Zerny's storeless abstract machine \cite{DanvyZ13} to which the recipe does
apply. The result of the quantitative analysis confirms that the DGoIM with the rewrites-first interleaving can be classified as ``efficient,'' following Accattoli's taxonomy of abstract machines
introduced in \cite{Accattoli16}.

As we intend to use the DGoIM as a starting point for semantics-directed compilation, this result is an important confirmation that no hidden inefficiencies lurk within the fabric of the rather complex machinery of the DGoIM.

\section{The Dynamic GoI Machine}

\subsection{Well-boxed Graphs}

The graphs used to construct the DGoIM are essentially MELL proof structures
\cite{Girard87LL} of the multiplicative and exponential fragment of
linear logic.
They are directed, and built over the fixed set of nodes called
``generators'' shown in Fig.~\ref{fig:Generators}.
\begin{figure}[ht]
 \centering
 \begin{minipage}[b]{.7\hsize}
  \centering
  \figGeneratorsAx
  \figGeneratorsCut
  \figGeneratorsTensor
  \figGeneratorsParr
  \figGeneratorsPrincipal
  \figGeneratorsAuxiliary
  \figGeneratorsDereliction
  \figGeneratorsContraction
  \caption{Generators of Graphs}
  \label{fig:Generators}
 \end{minipage}%
 \begin{minipage}[b]{.25\hsize}
  \centering
  \figBox
  \caption{$\oc$-box $H$}
  \label{fig:Box}
 \end{minipage}
\end{figure}

A $\mathsf{C}_n$-node is annotated by a natural number $n$ that
indicates its in-degree, i.e.\ the number of incoming edges.
It generalises a contraction node, whose in-degree is $2$, and a
weakening node, whose in-degree is $0$, of MELL proof structures.
In Fig.~\ref{fig:Generators}, a bunch of $n$ edges is depicted by a
single arrow with a strike-out.

Graphs must satisfy the well-formedness condition below. Note that, 
unlike the usual approach~\cite{Girard87LL}, we need not
assign MELL formulas to edges, nor require a graph to
be a valid proof net.
\begin{definition}[well-boxed]
 \label{def:WellBoxed}
 A directed graph $G$ built over the generators in
 Fig.~\ref{fig:Generators} is \emph{well-boxed} if:
 \begin{itemize}
  \item it has no incoming edges
  \item each $\oc$-node $v$ in $G$ comes with a sub-graph $H$ of $G$
	and an arbitrary number of $\wn$-nodes $\vec{u}$ such that:
	\begin{itemize}
	 \item the sub-graph $H$ (called ``$\oc$-box'') is well-boxed
	       inductively and has at least one outgoing edges
	 \item the $\oc$-node $v$ (called ``principal door of $H$'')
	       is the target of one outgoing edge of $H$
	 \item the $\wn$-nodes $\vec{u}$ (called ``auxiliary doors of
	       $H$'') are the targets of all the other outgoing edges
	       of $H$
	\end{itemize}
  \item each $\wn$-node is an auxiliary door of exactly one $\oc$-box
  \item any two distinct $\oc$-boxes with distinct principal doors are
	either disjoint or nested
 \end{itemize}
\end{definition}
\noindent
Note that a $\oc$-box might have no auxiliary doors.
We use a dashed box to indicate a $\oc$-box together with its
principal door and its auxiliary doors, as in Fig.~\ref{fig:Box}.
The auxiliary doors are depicted by a single $\wn$-node with a thick
frame and with single incoming and outgoing arrows with strike-outs.
Directions of edges are omitted in the rest of the paper, if not ambiguous, to
reduce visual clutter.

\subsection{Pass Transitions and Rewrite Transitions}

The DGoIM is formalised as a labelled transition system with two kinds
of transitions, namely \emph{pass} transitions $\dashrightarrow$ and
\emph{rewrite} transitions $\rightsquigarrow$.
Labels of transitions are $\mathsf{b},\mathsf{s},\mathsf{o}$ that
stand for ``beta,'' ``substitution,'' and ``overheads'' respectively.

Let $\mathcal{L}$ be a fixed countable (infinite) set of \emph{names}.
The state of the transition system $s = (\mathbb{G},p,h,m)$ consists
of the following elements:
\begin{itemize}
 \item a \emph{named} well-boxed graph $\mathbb{G} = (G,\ell_G)$, that
       is a well-boxed graph $G$ with a \emph{naming} $l_G$ that
       assigns a unique name $\alpha \in \mathcal{L}$ to each node of
       $G$
 \item a pair $p = (e,d)$ called \emph{position}, of an edge $e$ of
       $G$ and a \emph{direction} $d \in \{ \uparrow,\downarrow\}$
 \item a \emph{history stack} $h$ defined by the grammar
       $
	h ::= \square
	\mid \mathsf{Ax}_\alpha:h \mid \mathsf{Cut}_\alpha:h
	\mid \otimes_\alpha:h \mid \parr_\alpha:h 
	\mid \oc_\alpha:h \mid \mathsf{D}_\alpha:h
	\mid \mathsf{C}^n_\alpha:h
       $,
       where $\alpha \in \mathcal{L}$ and $n$ is some positive natural
       number.
 \item a \emph{multiplicative stack} $m$ defined by the BNF grammar
       $m ::= \square \mid \mathsf{l}:m \mid \mathsf{r}:m$.
\end{itemize}
We refer to a node by its name, i.e.\ we say ``a node $\alpha$''
instead of ``a node whose name is $\alpha$.''

A pass transition
$(\mathbb{G},p,h,m) \dashrightarrow_\mathsf{o} (\mathbb{G},p',h',m')$
changes a
position using a multiplicative stack, pushes to a history stack, and
keeps a named graph unchanged.
All pass transitions have the label~$\mathsf{o}$.

Fig.~\ref{fig:PassTransitions} shows pass transitions graphically,
omitting irrelevant parts of graphs.
A position $p=(e,d)$ is represented by a bullet $\bullet$ (called
``token'') on the edge $e$ together with the direction $d$.
Recall that an edge with a strike-out represents a bunch of edges.
The transition in the last line of Fig.~\ref{fig:PassTransitions}
(where we assume $n > 0$)
moves a token from one of the incoming edges of a $\mathsf{C}_n$-node
to the outgoing edge of the node.
Node names $\alpha \in \mathcal{L}$ are indicated wherever needed.
\begin{figure}[htp]
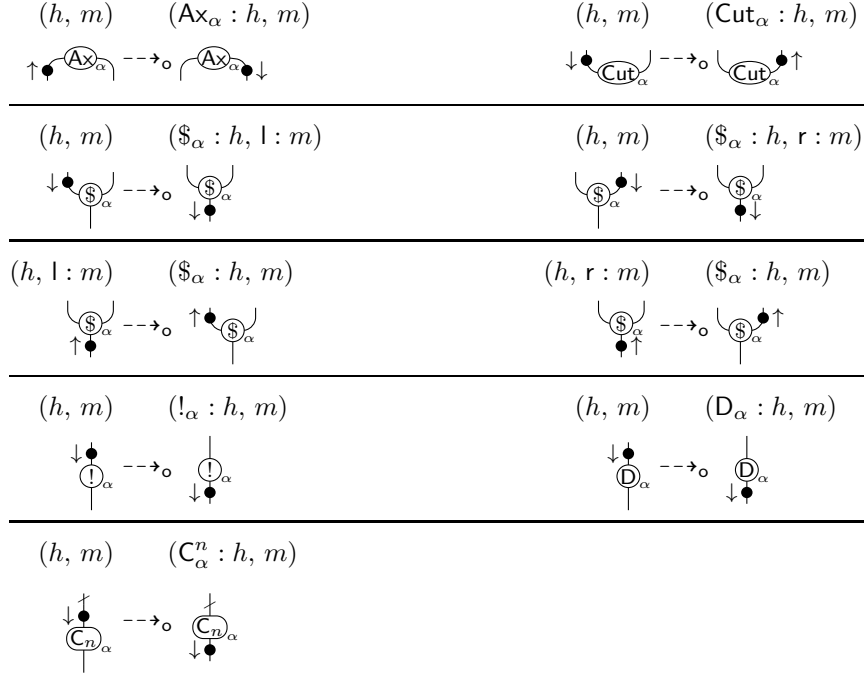

 
 \begin{align*}
  (h,\,m) &\qquad (\mathsf{Ax}_\alpha:h,\,m) &
  (h,\,m) &\qquad (\mathsf{Cut}_\alpha:h,\,m) \\
  \figPassTransitionsAxL &\dashrightarrow_\mathsf{o}
  \figPassTransitionsAxR &
  \figPassTransitionsCutL &\dashrightarrow_\mathsf{o}
  \figPassTransitionsCutR \\ \cline{1-4}
  (h,\,m) &\qquad (\$_\alpha:h,\,\mathsf{l}:m) &
  (h,\,m) &\qquad (\$_\alpha:h,\,\mathsf{r}:m) \\
  \figPassTransitionsMultLeftDownL &\dashrightarrow_\mathsf{o}
  \figPassTransitionsMultLeftDownR &
  \figPassTransitionsMultRightDownL &\dashrightarrow_\mathsf{o}
  \figPassTransitionsMultRightDownR \\ \cline{1-4}
  (h,\,\mathsf{l}:m) &\qquad (\$_\alpha:h,\,m) &
  (h,\,\mathsf{r}:m) &\qquad (\$_\alpha:h,\,m) \\
  \figPassTransitionsMultLeftUpL &\dashrightarrow_\mathsf{o}
  \figPassTransitionsMultLeftUpR &
  \figPassTransitionsMultRightUpL &\dashrightarrow_\mathsf{o}
  \figPassTransitionsMultRightUpR \\ \cline{1-4}
  (h,\,m) &\qquad (\oc_\alpha:h,\,m) &
  (h,\,m) &\qquad (\mathsf{D}_\alpha:h,\,m) \\
  \figPassTransitionsPrincipalL &\dashrightarrow_\mathsf{o}
  \figPassTransitionsPrincipalR &
  \figPassTransitionsDerelictionL &\dashrightarrow_\mathsf{o}
  \figPassTransitionsDerelictionR \\ \cline{1-4}
  (h,\,m) &\qquad (\mathsf{C}^n_\alpha:h,\,m) \\
  \figPassTransitionsContractionL &\dashrightarrow_\mathsf{o}
  \figPassTransitionsContractionR
 \end{align*}
 \caption{Pass Transitions ($\$ \in \{ \otimes,\parr \}$, $n > 0$)}
 \label{fig:PassTransitions}
\end{figure}

A rewrite transition
$(\mathbb{G},(e,d),h,m) \rightsquigarrow_\mathsf{x}
(\mathbb{G}',(e',d),h',m)$
consumes some elements of a history stack, rewrites a sub-graph of a
named graph, and updates a position (or, more precisely, its edge).
The label $\mathsf{x}$ of a rewrite transition
$\rightsquigarrow_\mathsf{x}$ is either $\mathsf{b}$, $\mathsf{s}$ or
$\mathsf{o}$.
Fig.~\ref{fig:RewriteTransitions} shows rewrite transition in the same
manner as Fig.~\ref{fig:PassTransitions}.
Multiplicative stacks are not present in the figure since they are
irrelevant.
The $\sharp$-node represents some arbitrary node (incoming edges
omitted).
We can see that no rewrite transition breaks the well-boxed-ness of a
graph.
\begin{figure}[htp]
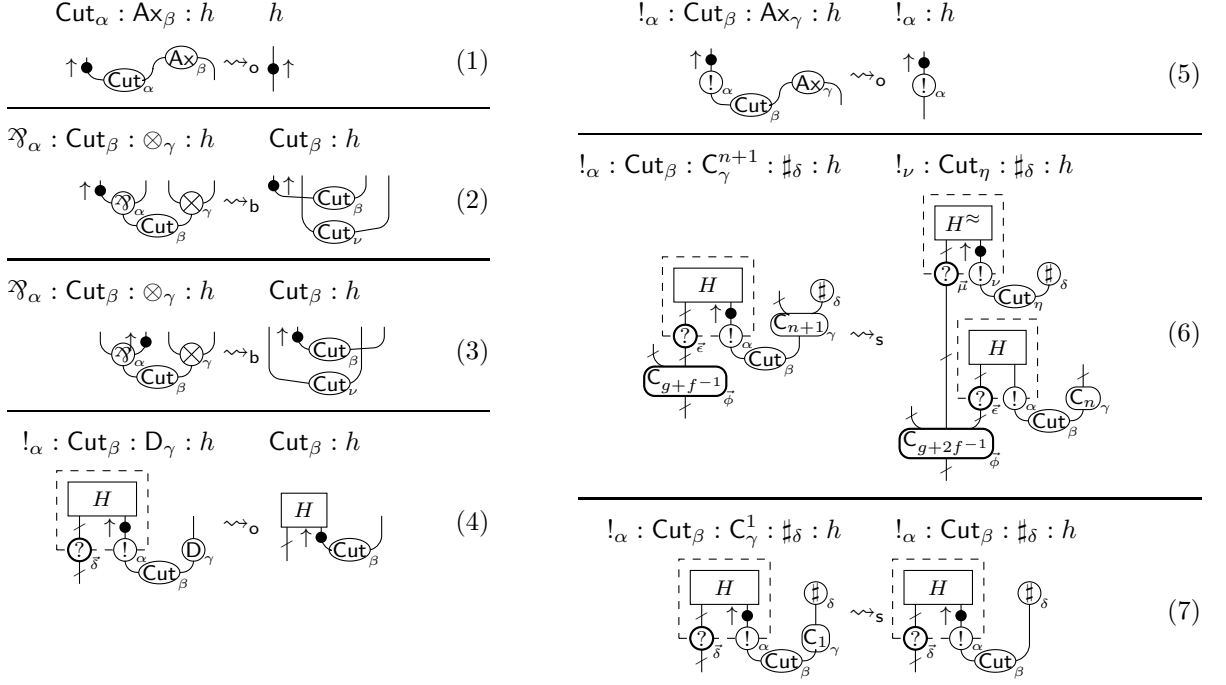

 \centering
 \centering
 \begin{minipage}[t]{.45\hsize}
  \begin{align}
   \mathsf{Cut}_\alpha:\mathsf{Ax}_\beta:h &\qquad h \notag \\
   \figRewriteTransitionsAxL
   &\rightsquigarrow_\mathsf{o}
   \figRewriteTransitionsAxR \label{RW:Ax1} \\ \cline{1-3}
   \parr_\alpha:\mathsf{Cut}_\beta:\otimes_\gamma:h
   &\qquad \mathsf{Cut}_\beta:h \notag \\
   \figRewriteTransitionsMultLeftL
   &\rightsquigarrow_\mathsf{b}
   \figRewriteTransitionsMultLeftR \label{RW:MultL} \\ \cline{1-3}
   \parr_\alpha:\mathsf{Cut}_\beta:\otimes_\gamma:h
   &\qquad \mathsf{Cut}_\beta:h \notag \\
   \figRewriteTransitionsMultRightL
   &\rightsquigarrow_\mathsf{b}
   \figRewriteTransitionsMultRightR \label{RW:MultR} \\ \cline{1-3}
   \oc_\alpha:\mathsf{Cut}_\beta:\mathsf{D}_\gamma:h
   &\qquad \mathsf{Cut}_\beta:h \notag \\
   \figRewriteTransitionsDerelictionL
   &\rightsquigarrow_\mathsf{o}
   \figRewriteTransitionsDerelictionR \label{RW:D} 
  \end{align}
 \end{minipage}%
 \begin{minipage}[t]{.55\hsize}
  \begin{align}
   \oc_\alpha:\mathsf{Cut}_\beta:\mathsf{Ax}_\gamma:h
   &\qquad \oc_\alpha:h \notag \\
   \figRewriteTransitionsAxPrincipalL
   &\rightsquigarrow_\mathsf{o}
   \figRewriteTransitionsAxPrincipalR \label{RW:Ax2} \\ \cline{1-3}
   \oc_\alpha:\mathsf{Cut}_\beta:\mathsf{C}^{n+1}_\gamma
   :\sharp_\delta:h
   &\qquad \oc_\nu:\mathsf{Cut}_\eta:\sharp_\delta:h \notag \\
   \figRewriteTransitionsContractionPosL
   &\rightsquigarrow_\mathsf{s}
   \figRewriteTransitionsContractionPosR \label{RW:CPos} \\ \cline{1-3}
   \oc_\alpha:\mathsf{Cut}_\beta:\mathsf{C}^1_\gamma
   :\sharp_\delta:h
   &\qquad \oc_\alpha:\mathsf{Cut}_\beta:\sharp_\delta:h \notag \\
   \figRewriteTransitionsContractionNullL
   &\rightsquigarrow_\mathsf{s}
   \figRewriteTransitionsContractionNullR \label{RW:COne}   
  \end{align}
 \end{minipage}
 \caption{Rewrite Transitions ($n > 0$)}
 \label{fig:RewriteTransitions}
\end{figure}

The rewrite transitions
(\ref{RW:Ax1}),(\ref{RW:MultL}),(\ref{RW:MultR}), and (\ref{RW:D}) are
exactly taken from MELL cut elimination \cite{Girard87LL}.
The rewrite transition (\ref{RW:Ax2}) is a variant of (\ref{RW:Ax1}).
It acts on a connected pair of a $\mathsf{Cut}$-node and an
$\mathsf{Ax}$-node that arises as a result of the transition
(\ref{RW:CPos}) or (\ref{RW:COne}) but cannot be rewritten by the
transition (\ref{RW:Ax1}).
These transitions (\ref{RW:CPos}) and (\ref{RW:COne}) are inspired by
the MELL cut elimination process for (binary) contraction nodes; note
that we assume $n > 0$ in Fig.~\ref{fig:RewriteTransitions}.

The rewrite transition (\ref{RW:CPos}) in
Fig.~\ref{fig:RewriteTransitions} deserves further explanation.
The sub-graph $H^\approx$ is a copy of the $\oc$-box $H$ where all the
names are replaced with fresh ones.
The thick $\mathsf{C}_{g+f^{-1}}$-node and
$\mathsf{C}_{g+2f^{-1}}$-node represent families
$\{ \mathsf{C}_{g(j)+f^{-1}(j)} \}_{j=0}^{m},
\{ \mathsf{C}_{g(j)+2f^{-1}(j)} \}_{j=0}^{m},$
of $\mathsf{C}$-nodes respectively.
They are connected to $\wn$-nodes
$\vec{\epsilon}=\epsilon_0,\ldots,\epsilon_l$ and
$\vec{\mu}=\mu_0,\ldots,\mu_l$
in such a way that:
\begin{itemize}
 \item the natural numbers $l,m$ satisfy $l \geq m$, and come with
       a surjection
       $f \colon \{ 0,\ldots,l \} \twoheadrightarrow \{ 0,\ldots,m \}$
       and a function $g \colon \{ 0,\ldots,m \} \to \N$ to the set
       $\N$ of natural numbers
 \item each $\wn$-node $\epsilon_i$ and each $\wn$-node $\mu_i$ are
       both connected to the $\mathsf{C}$-node $\phi_{f(i)}$
 \item each $\mathsf{C}$-node $\phi_j$ has $g(j)$ incoming
       edges whose source is none of the $\wn$-nodes
       $\vec{\epsilon},\vec{\mu}$.
\end{itemize}

Some rewrite transitions introduce new nodes to a graph.
We \emph{require} that the uniqueness of names throughout a whole
graph is not violated by these transitions.
Under this requirement, the introduced names $\nu,\vec{\mu}$ and the
renaming $H^\approx$ in Fig.~\ref{fig:RewriteTransitions} can be
arbitrary.
\begin{definition}
We call a state $((G,\ell_G),p,h,m)$ \emph{rooted at $e_0$} for an
open (outgoing) edge $e_0$ of $G$, if there exists a finite
sequence
$((G,\ell_G),(e_0,\uparrow),\square,\square) \dashrightarrow^*
((G,\ell_G),p,h,m)$
of pass transitions such that the position $p$ appears only last in
the sequence.
\end{definition}
Lem.~\ref{lem:DGoIMInvariants}(1) below implies that, the DGoIM can
determine whether a rewrite transition is possible at a rooted state by
only examining a history stack.
The rooted property is preserved by transitions.
\begin{lemma}[rooted states]
 \label{lem:DGoIMInvariants}
 Let $((G,\ell_G),(e,d),h,m)$ be a rooted state at $e_0$ with a
 (finite) sequence
 \[
 ((G,\ell_G),(e_0,{\uparrow}),\square,\square) \dashrightarrow^*
 ((G,\ell_G),(e,d),h,m).
 \]
 \begin{enumerate}
  \item History stack represents an (undirected and possibly cyclic)
	path of the graph $G$ connecting edges $e_0$ and~$e$.
  \item If a transition $((G,\ell_G),(e,d),h,m)
	\mathrel{(\dashrightarrow \cup \rightsquigarrow)}
	((G',\ell_{G'}),p',h',m')$
	is possible, the open edges of $G'$ are bijective to those of
	$G$, and the state $((G',\ell_{G'}),p',h',m')$ is rooted at
	the open edge corresponding to~$e_0$.
 \end{enumerate}
\end{lemma}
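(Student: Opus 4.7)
I would prove both parts by induction on the length of the pass-transition sequence witnessing rootedness, with part (1) used as a sub-lemma in the treatment of part (2).

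For part (1), the base case is the zero-length sequence, in which $e = e_0$, $h = \square$, and the empty stack represents the trivial one-edge path consisting of $e_0$ alone. For the inductive step, I inspect each pass rule in Fig.~\ref{fig:PassTransitions} and observe that every rule pushes exactly one symbol $\sharp_\alpha$ onto the stack, where $\alpha$ names the node being crossed, while moving the token between two edges incident to that same node. The new top symbol therefore records precisely the extra traversal needed to extend the previous path from its endpoint to the updated position, maintaining the invariant.

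For part (2), I case-split on the kind of transition. The pass case is immediate: the named graph is unchanged, so the bijection on open edges is the identity, and a witness sequence for the new state is obtained by appending the single additional pass transition. The rewrite case proceeds in two steps. First, by inspection of each scheme in Fig.~\ref{fig:RewriteTransitions}, the redex is a sub-graph whose external interface consists only of edges incident to the nodes recorded near the top of $h$; since, by part (1), $h$ encodes a path starting from the outgoing open edge $e_0$, the open boundary of $G$ lies entirely outside the rewritten region, yielding the canonical bijection between open edges of $G$ and $G'$. Second, to witness rootedness of the new state, I factor the pre-rewrite stack as $h_{\text{top}} \cdot h_{\text{rest}}$, where $h_{\text{top}}$ is the portion consumed. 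By part (1), the pass sequence that produces $h_{\text{rest}}$ touches only nodes outside the redex, so it replays verbatim in $G'$; to this I append the short pass sequence in $G'$ that pushes the new top symbols $h_{\text{top}}'$ onto the stack and terminates on the new edge $e'$. This appended sequence is read off directly by matching the post-rewrite local geometry against the rules of Fig.~\ref{fig:PassTransitions}.

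The hardest case is the rewrite (\ref{RW:CPos}), which duplicates the entire $\oc$-box $H$ into a fresh copy $H^\approx$ and installs a new bank of $\mathsf{C}$-nodes $\{\phi_j\}_{j=0}^{m}$ whose arities are governed by $g + f^{-1}$ and $g + 2f^{-1}$, wired through the auxiliary-door $\wn$-nodes $\vec{\epsilon}$ and $\vec{\mu}$. Verifying the previous step here amounts to showing both that the new head symbols $\oc_\nu : \mathsf{Cut}_\eta : \sharp_\delta$ correspond to a three-step pass path in $G'$ terminating on the correct outgoing edge of $H^\approx$, and that the rewiring through $f$ and $g$ does not disturb the interface along which the unchanged tail $h$ encodes its path. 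Though conceptually routine, this will require careful bookkeeping of which auxiliary door the token is steered through after duplication; the simpler cases (\ref{RW:COne}) and (\ref{RW:Ax2}), which introduce fewer fresh nodes, serve as warm-up instances of the same pattern.
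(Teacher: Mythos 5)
Your overall plan coincides with the paper's (much terser) proof sketch: part~(1) by induction on the length of the pass sequence, and part~(2) by a case split in which pass transitions are immediate and rewrite transitions are checked to act bijectively on open edges, with rootedness re-established by replaying the surviving tail of the history stack. The additional bookkeeping you propose for (\ref{RW:CPos}) is in the right spirit.

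There is, however, one concrete claim that fails. You justify the bijection on open edges by asserting that ``the open boundary of $G$ lies entirely outside the rewritten region.'' This is false for the transitions (\ref{RW:Ax1}) and (\ref{RW:Ax2}): the $\mathsf{Ax}$--$\mathsf{Cut}$ elimination fuses the two outer edges of the redex into a single new edge, and one of those outer edges can be the root edge $e_0$ itself (nothing prevents the source of the open edge from being the $\mathsf{Ax}$-node that is eliminated). The paper's proof singles out precisely these two transitions as the only ones that can modify the edge at which the state is rooted, and this is also why the lemma concludes rootedness at ``the open edge \emph{corresponding to} $e_0$'' rather than at $e_0$ itself. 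The bijection does still exist in these cases --- the fused edge is the image of $e_0$ --- but it must be exhibited by a separate case, not derived from the blanket disjointness claim. A smaller caveat: since part~(1) only tells you that $h$ encodes a possibly \emph{cyclic} path, the prefix generating $h_{\text{rest}}$ is not automatically node-disjoint from the redex, so ``replays verbatim in $G'$'' needs the further check that the edges visited by that prefix survive the rewrite unchanged (again up to the $\mathsf{Ax}$ fusions).
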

\begin{proof}
 (Sketch.)
 The proof of the first part is by induction on the length of the
 sequence of move transitions.
 For the second part,
 rewrite transitions $\rightsquigarrow$ modify open edges of a graph
 in a bijective way.
 The edge that a state is rooted at can be modified only by the
 rewrite transitions (\ref{RW:Ax1}) and (\ref{RW:Ax2}) involving
 $\mathsf{Ax}$-nodes.
\end{proof}

\subsection{Cost Analysis of the DGoIM}\label{sec:TimeCost}

The time cost of updating stacks is constant, as each transition changes
only a fixed number of top elements of stacks.
Updating a position is local and needs constant time, as it does
not require searching beyond the next edge in the graph from the current edge.
We can conclude all pass transitions take constant time.

We estimate the time cost of rewrite transitions by counting updated
nodes.
The rewrite transitions (\ref{RW:Ax1})--(\ref{RW:MultR}) involve a
fixed number of nodes, and the transition (\ref{RW:COne}) eliminates
one $\mathsf{C}_1$-node.
Only the transitions (\ref{RW:D}) and (\ref{RW:CPos}) have
non-constant time cost.
The number of doors deleted in the transition (\ref{RW:D}) can be
arbitrary, and so is the number of nodes introduced in the transition
(\ref{RW:CPos}).


Pass transitions and rewrite transitions are separately deterministic
(up to the choice of new names).
However, both a pass transition and a rewrite transition are possible
at some states.
We here opt for the following ``rewrites-first'' way to interleave
pass transitions with as much rewrite transitions as possible:
\begin{align*}
 s \rightarrowtriangle_\mathsf{x} s' \iffdef
 \begin{cases}
  s \rightsquigarrow_\mathsf{x} s'
  &\text{(if $\rightsquigarrow_\mathsf{x}$ possible)} \\
  s \dashrightarrow_\mathsf{x} s'
  &\text{(if only $\dashrightarrow[_\mathsf{x}$ possible)}.
 \end{cases}
\end{align*}
The DGoIM with this strategy yields a deterministic labelled
transition system $\rightarrowtriangle$ up to the choice of new names
in rewrite transitions.
We denote it by DGoIM{\rewritesfirst}, making the strategy explicit.
Note that there can be other strategies of interleaving although we do
not explore them here.

Before we conclude, several considerations about space cost analysis. Space costs are generally bound by time costs, so from our analysis there is an implicit guarantees that space usage will not explode. But if a more refined space cost analysis is desired, the following might prove to be useful. 

The space required in implementing a named well-boxed graph is bounded by
the number of its nodes.
The number of edges is linear in the number of nodes, because each
generator has a fixed out-degree and every edge of a well-boxed graph
has its source.

Additionally a $\oc$-box can be represented by associating its
auxiliary doors to its principal door.
This adds connections between doors to a graph that are as many as
$\wn$-nodes.
It enables the DGoIM to identify nodes of a $\oc$-box by following
edges from its principal and auxiliary doors.
Nodes in a $\oc$-box that are not connected to doors can be ignored,
since these nodes are never visited by a token (i.e.\ pointed by a
position) as long as the DGoIM acts on rooted states.

Only the rewrite transition (\ref{RW:CPos}) can increase the number of
nodes of a graph by copying a $\oc$-box with its doors.
Rewrite transitions can copy $\oc$-boxes and eliminate the $\oc$-box
structure, but they never create new $\oc$-boxes or change existing
ones.
This means that, in a sequence of transitions that starts with a graph
$G$, any $\oc$-boxes copied by the rewrite transition (\ref{RW:CPos})
are sub-graphs of the graph $G$.
Therefore the number of nodes of a graph increases linearly in the
number of transitions.

Elements of history stacks and multiplicative stacks, as well as a
position, are essentially pointers to nodes.
Because each pass/rewrite transition adds at most one element to each
stack, the lengths of stacks also grow linearly in the number of
transitions.

\section{Weak Simulation of the Call-by-Need SAM}

\subsection{Storeless Abstract Machine (SAM)}

We show the DGoIM{\rewritesfirst} implements call-by-need
evaluation by building a weak simulation of the call-by-need
Storeless Abstract Machine (SAM) defined in
Fig.~\ref{fig:StorelessAbstractMachine}.
It simplifies Danvy and Zerny's storeless machine
\cite[Fig.~8]{DanvyZ13} and accommodates a partial mechanism of
garbage collection (namely, transition (\ref{SAM:SOne})).
We will return to a discussion of garbage collection at the end of this section. 

The SAM is a labelled transition system between \emph{configurations}
$(\overline{t},E)$.
They are classified into two groups, namely \emph{term}
configurations and \emph{context} configurations, that are indicated
by annotations $\mathit{term},\mathit{ctxt}$ respectively.
Pure terms (resp.\ pure values) are terms (resp.\ values) that contain
no explicit substitutions $t[x \leftarrow u]$; we sometimes omit the
word ``pure'' and the overline in denotation as long as that raises no
confusion.
\begin{figure*}[t]
 \centering
 \begin{tabular}{rlrl}
  Terms
  & $t ::= x \mid \abs{x}{}{t} \mid \app{t}{t} \mid t[x \leftarrow t]$
  & \qquad Pure terms
  & $\overline{t} ::= x \mid \abs{x}{}{\overline{t}}
  \mid \app{\overline{t}}{\overline{t}}$ \\
  Values & $v ::= \abs{x}{}{t}$
  & Pure values & $\overline{v} ::= \abs{x}{}{\overline{t}}$ \\
  Evaluation contexts
  & $E ::= \emptyCtxt \mid \app{E}{\overline{t}}
  \mid E[x \leftarrow \overline{t}]
  \mid \plug{E}{x}[x \leftarrow E]$ \\
  Substitution contexts
  & $A ::= \emptyCtxt \mid A[x \leftarrow \overline{t}]$ \\
 \end{tabular}
 \begin{align}
  (\app{\overline{t}}{\overline{u}},\,E)_\mathit{term}
  &\to_\mathsf{o}
  (\overline{t},\,
  \plug{E}{\app{\emptyCtxt}{\overline{u}}})_\mathit{term}
  \label{SAM:O1} \\
  (x,\,\plug{E_1}{E_2[x \leftarrow \overline{t}]})_\mathit{term}
  &\to_\mathsf{o}
  (\overline{t},\,
  \plug{E_1}{\plug{E_2}{x}[x \leftarrow \emptyCtxt]})_\mathit{term}
  \label{SAM:O2} \\
  (\overline{v},\,E)_\mathit{term} &\to_\mathsf{o}
  (\overline{v},\,E)_\mathit{ctxt} \label{SAM:O3} \\
  (\abs{x}{}{\overline{t}},\,
  \plug{E}{\app{A}{\overline{u}}})_\mathit{ctxt}
  &\to_\mathsf{b}
  (\overline{t},\,
  \plug{E}{\plug{A}{
  \emptyCtxt[x \leftarrow \overline{u}]}})_\mathit{term}
  \label{SAM:B} \\
  (\overline{v},\,
  \plug{E_1}{\plug{E_2}{x}[x \leftarrow A]})_\mathit{ctxt}
  &\to_\mathsf{s}
  (\overline{v}^\approx,\,
  \plug{E_1}{\plug{A}{E_2[x \leftarrow \overline{v}]}})_\mathit{ctxt}
  & \text{(if $x \in \FV_\emptyset(E_2)$)} \label{SAM:SPos} \\
  (\overline{v},\,
  \plug{E_1}{\plug{E_2}{x}[x \leftarrow A]})_\mathit{ctxt}
  &\to_\mathsf{s}
  (\overline{v},\,\plug{E_1}{\plug{A}{E_2}})_\mathit{ctxt}
  & \text{(if $x \notin \FV_\emptyset(E_2)$)} \label{SAM:SOne}
 \end{align}
 \caption{Call-by-need Storeless Abstract Machine (SAM)}
 \label{fig:StorelessAbstractMachine}
\end{figure*}

Each evaluation context $E$ contains exactly one open hole
$\emptyCtxt$, and replacing it with a term $t$ (or an evaluation
context $E'$) yields a term $\plug{E}{t}$ (or an evaluation context
$\plug{E}{E'}$) called \emph{plugging}.
In particular an evaluation context
$\plug{E'}{x}[x \leftarrow E]$ replaces the open hole of $E'$ with $x$
and keeps the open hole of $E$.

Labels of transitions are the same as those used for the DGoIM
(i.e.\ $\mathsf{b}$, $\mathsf{s}$ and $\mathsf{o}$).
The transition (\ref{SAM:B}), with the label $\mathsf{b}$, corresponds
to the $\beta$-reduction
where evaluation and substitution of function arguments are delayed.
Substitution happens in the transitions (\ref{SAM:SPos}) and
(\ref{SAM:SOne}), with the label $\mathsf{s}$, that replaces
exactly one occurrence of a variable.
The other transitions with the label $\mathsf{o}$, namely
$(\overline{t},E) \to_\mathsf{o} (\overline{t'},E')$,
search a redex by rearranging a configuration.
The two pluggings $\plug{E}{\overline{t}}$ and
$\plug{E'}{\overline{t'}}$ indeed yield exactly the same term.

We characterise ``free'' variables using multisets of variables.
Multisets make explicit how many times a variable is duplicated in a
term (or an evaluation context).
This information of duplication is later used in
translating terms to graphs.
\begin{notation*}[multiset]
 A multiset $\mathbf{x} := [x,\ldots,x]$ consists of a finite number
 of $x$.
 The multiplicity of $x$ in a multiset $M$ is denoted by $M(x)$.
 We write $x \in^k M$ if $M(x) = k$, $x \in M$ if $M(x) > 0$ and
 $x \notin M$ if $M(x) = 0$.
 A multiset $M$ comes with its support set $\mathrm{supp}(M)$.
 For two multisets $M$ and $M'$, their sum and difference are denoted
 by $M + M'$ and $M - M'$ respectively.
 Removing \emph{all} $x$ from a multiset $M$ yields the multiset
 $M \backslash x$, e.g.\ ${[x,x,y] \backslash x} = [y]$.
\end{notation*}
\noindent
Each term $t$ and each evaluation context $E$ are respectively
assigned multisets of variables $\FV(t),\FV_M(E)$, with $M$ a multiset
of variables.
The multisets $\FV$ are defined inductively as follows.
\begin{align*}
 \FV(x) &:= [x], \\
 \FV(\abs{x}{}{t}) &:= \FV(t) \backslash x, \\
 \FV(\app{t}{u}) &:= \FV(t) + \FV(u), \\
 \FV(t[x \leftarrow u]) &:= (\FV(t) \backslash x) + \FV(u). \\
 \FV_M(\emptyCtxt) &:= M, \\
 \FV_M(\app{E}{t}) &:= \FV_M(E) + \FV(t), \\
 \FV_M(E[x \leftarrow t]) &:=
 (\FV_M(E)) \backslash x + \FV(t), \\
 \FV_M(\plug{E'}{x}[x \leftarrow E]) &:=
 (\FV_{[x]}(E')) \backslash x + \FV_M(E).
\end{align*}
The following equations can be proved by a straightforward induction
on $E$.
\begin{lemma}[decomposition]
 \label{lem:FVComposition}
 \begin{align*}
  \FV(\plug{E}{t}) &= \FV_{\FV(t)}(E) \\
  \FV_M(\plug{E}{E'}) &= \FV_{\FV_M(E')}(E)
 \end{align*}
\end{lemma}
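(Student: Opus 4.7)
The plan is to prove both equations simultaneously by structural induction on the evaluation context $E$, since the fourth clause of the grammar of evaluation contexts couples them: its inductive step for the second equation will invoke the induction hypothesis of the second equation on an inner sub-context. The base case $E = \emptyCtxt$ is immediate from the definitions, as $\plug{\emptyCtxt}{t} = t$ gives $\FV(\plug{\emptyCtxt}{t}) = \FV(t) = \FV_{\FV(t)}(\emptyCtxt)$, and likewise $\FV_M(\plug{\emptyCtxt}{E'}) = \FV_M(E') = \FV_{\FV_M(E')}(\emptyCtxt)$.

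The two intermediate cases $E = \app{E_1}{\overline{u}}$ and $E = E_1[x \leftarrow \overline{u}]$ are routine. In each, plugging commutes with the outer constructor, so after expanding the definition of $\FV$ (respectively $\FV_M$) on the result one obtains a sum in which exactly one summand is $\FV(\plug{E_1}{t})$ (or $\FV_M(\plug{E_1}{E'})$); applying the induction hypothesis to $E_1$ and matching against the defining clause for $\FV_M(\app{E_1}{\overline{u}})$ or $\FV_M(E_1[x \leftarrow \overline{u}])$ yields the desired right-hand side.

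The genuinely interesting case is $E = \plug{E_1}{x}[x \leftarrow E_2]$: here the hole to be filled lives inside $E_2$, so plugging gives $\plug{\bigl(\plug{E_1}{x}[x \leftarrow E_2]\bigr)}{t} = \plug{E_1}{x}[x \leftarrow \plug{E_2}{t}]$. Unfolding $\FV$ of the right-hand side produces $(\FV(\plug{E_1}{x}) \backslash x) + \FV(\plug{E_2}{t})$. I would apply the first equation of the induction hypothesis to $E_1$ with the term $x$, using crucially the identity $\FV(x) = [x]$, to rewrite $\FV(\plug{E_1}{x})$ as $\FV_{[x]}(E_1)$; and apply it again to $E_2$ with $t$ to rewrite $\FV(\plug{E_2}{t})$ as $\FV_{\FV(t)}(E_2)$. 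Recombining yields exactly $\FV_{\FV(t)}(\plug{E_1}{x}[x \leftarrow E_2])$ by the defining clause of $\FV_M$. The corresponding case for the second equation proceeds identically, except the second equation of the induction hypothesis is applied to $E_2$ (with $E'$ in place of $t$) instead of the first.

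The main (and only mild) obstacle is the asymmetry built into this fourth clause of the grammar: the inner sub-context $E_1$ always receives the fixed multiset $[x]$ in the definition of $\FV_M$, whereas $E_2$ carries the ambient parameter ($\FV(t)$ or $\FV_M(E')$). Once one notices that $\FV(x) = [x]$ is precisely what bridges the two forms of the lemma and supplies the hypothesis needed on $E_1$, the case goes through without further bookkeeping.
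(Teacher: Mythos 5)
Your proof is correct and follows exactly the route the paper indicates: a straightforward structural induction on $E$ (the paper gives no further detail), with the only non-trivial point being the clause $E = \plug{E_1}{x}[x \leftarrow E_2]$, which you handle correctly via $\FV(x) = [x]$ and the induction hypothesis on $E_2$. Nothing to add.
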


A variable $x$ is \emph{bound} in a term $t$ if it appears in the form
of $\abs{x}{}{u}$ or $u[x \to u']$.
A variable $x$ is \emph{captured} in an evaluation context $E$ if it
appears in the form of $E'[x \leftarrow \overline{t}]$ (but not in the
form of $\plug{E'}{x}[x \leftarrow E'']$).
The transitions (\ref{SAM:SPos}) and (\ref{SAM:SOne}) depend on
whether or not the bound variable $x$ appears in the evaluation
context $E_2$.
If the variable $x$ appears, the value $\overline{v}$ is kept
for later use and its copy $\overline{v}^\approx$ is substituted for
$x$.
If not, the value $\overline{v}$ itself is substituted for $x$.

The SAM does not assume the $\alpha$-equivalence, but explicitly deals
with it in copying a value.
The copy $\overline{v}^\approx$ in has all its bound variables
replaced by distinct fresh variables (i.e.\ distinct variables that do
not appear in a whole configuration).
This implies that the SAM is deterministic up to the choice of new
variables introduced in copying.

A term $t$ is \emph{closed} if $\FV(t) = \emptyset$; and is
\emph{well-named} if each variable gets bound at most once in $t$,
and each bound variable $x$ in $t$ satisfies $x \notin \FV(t)$.
An \emph{initial} configuration is a term configuration
$(\overline{t_0},\emptyCtxt)_\mathit{term}$ where $\overline{t_0}$ is
closed and well-named.
A finite sequence of transitions from an initial configuration is
called an \emph{execution}.
A \emph{reachable} configuration $(\overline{t},E)$, that is a
configuration coming with an execution
from some initial configuration to itself, satisfies the following
invariant properties.
\begin{lemma}[reachable configurations]
 \label{lem:SAMInvariants}
 Let $(\overline{t},E)$ be a reachable configuration from an initial
 configuration $(\overline{t_0},\emptyCtxt)_\mathit{term}$.
 The term $\overline{t}$ is a sub-term of the initial term
 $\overline{t_0}$ up to $\alpha$-equivalence, and the plugging
 $\plug{E}{\overline{t}}$ is closed and well-named.
\end{lemma}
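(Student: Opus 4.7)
The plan is to proceed by induction on the length $n$ of the execution from the initial configuration $(\overline{t_0}, \emptyCtxt)_\mathit{term}$ to $(\overline{t}, E)$. The base case $n = 0$ is immediate, since the definition of an initial configuration already forces $\overline{t} = \overline{t_0}$ to be a sub-term of itself and $\plug{\emptyCtxt}{\overline{t_0}} = \overline{t_0}$ to be closed and well-named. For the inductive step, I would case-split on which of the six transition rules of Fig.~\ref{fig:StorelessAbstractMachine} produced $(\overline{t}, E)$ from its predecessor.

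The three overhead transitions (\ref{SAM:O1}), (\ref{SAM:O2}), (\ref{SAM:O3}) keep the plugging $\plug{E}{\overline{t}}$ syntactically unchanged (only the position of the hole moves), so closedness and well-named-ness carry over directly; the sub-term property for the new head is either immediate or follows by tracing a sub-term of $E$ back into $\overline{t_0}$ through Lem.~\ref{lem:FVComposition}. For the $\beta$-transition (\ref{SAM:B}), the new head is a sub-term of $\abs{x}{}{\overline{t}}$, and replacing the application by an explicit substitution preserves both the free-variable multiset of the whole plugging (by an easy calculation with the defining equations of $\FV$) and the ``each variable bound at most once'' invariant, because the $\lambda x$ binder is simply replaced by an $[x \leftarrow \overline{u}]$ binder.

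The substitution transitions (\ref{SAM:SPos}) and (\ref{SAM:SOne}) are the delicate cases. In (\ref{SAM:SPos}) the sub-term claim holds only up to $\alpha$-equivalence, because $\overline{v}^\approx$ is a renamed copy of $\overline{v}$; this is exactly the slack built into the statement. Well-named-ness requires checking two things: first, that the globally fresh variables introduced in $\overline{v}^\approx$ do not clash with any existing variable (immediate from the freshness requirement on copying), and second, that relocating $\overline{v}$ into $E_2[x \leftarrow \overline{v}]$ does not duplicate any binder, which holds because the previous plugging already contained $\overline{v}$ exactly once and the new plugging still does. In (\ref{SAM:SOne}) the value is merely moved and the dead substitution $[x \leftarrow A]$ is garbage-collected; closedness is preserved thanks to the side condition $x \notin \FV_\emptyset(E_2)$, which guarantees that no free occurrence of $x$ is abandoned.

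The main obstacle I expect is keeping the bookkeeping of well-named-ness transparent through (\ref{SAM:B}) and (\ref{SAM:SPos}), where context pluggings are re-associated; this should reduce to repeated applications of Lem.~\ref{lem:FVComposition} together with routine case analyses on the shape of $E$, with no new conceptual ingredient required.
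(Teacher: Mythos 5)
Your overall strategy --- induction on the length of the execution with a case split on the six transition rules --- is the same as the paper's, and your handling of (\ref{SAM:B}), (\ref{SAM:SPos}) and (\ref{SAM:SOne}) matches the points the paper's sketch emphasises. There is, however, one concrete gap: the induction hypothesis you are running with (``the head term $\overline{t}$ is a sub-term of $\overline{t_0}$'') is too weak to push through transition (\ref{SAM:O2}). There the new head term $\overline{t}$ is extracted from an explicit substitution $[x \leftarrow \overline{t}]$ stored inside the evaluation context, so nothing in your hypothesis tells you it is a sub-term of $\overline{t_0}$. You acknowledge this by saying the claim ``follows by tracing a sub-term of $E$ back into $\overline{t_0}$ through Lem.~\ref{lem:FVComposition},'' but Lem.~\ref{lem:FVComposition} is purely about free-variable multisets of pluggings and cannot deliver any sub-term information. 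The paper's proof closes this hole by explicitly strengthening the induction hypothesis: not only the head term, but every term $u'$ occurring in a sub-term $\app{t'}{u'}$ or $t'[x \leftarrow u']$ of the whole plugging $\plug{E}{\overline{t}}$ is a sub-term of $\overline{t_0}$ (up to $\alpha$-equivalence). With that invariant, (\ref{SAM:O2}) is immediate, and (\ref{SAM:O1}) and (\ref{SAM:B}) are what re-establish it, since they are the transitions that move terms into argument/substitution positions. This strengthening is also what is silently used later in Lem.~\ref{lem:SAMBounds} and Thm.~\ref{thm:TimeCost}, so it is worth stating.

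The rest of your argument is sound but you should make one dependency explicit: in (\ref{SAM:B}) the argument $\overline{u}$ is moved from outside the substitution context $A$ to inside it, so the ``easy calculation with the defining equations of $\FV$'' needs the well-named-ness of the previous plugging to guarantee that variables captured by $A$ do not occur in $\overline{u}$; otherwise the free-variable multiset of the plugging could change. Similarly, your observation for (\ref{SAM:SOne}) should appeal to the inductive well-named-ness invariant (as the paper does) to conclude that the discarded binder $x$ has no remaining occurrences anywhere in $\plug{E_1}{\plug{A}{\plug{E_2}{\overline{v}}}}$, not merely to the side condition $x \notin \FV_\emptyset(E_2)$, since one must also rule out occurrences of $x$ elsewhere in the plugging.
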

\begin{proof}
 (Sketch.)
 The proof is by induction on the length of the execution
 $(\overline{t_0},\emptyCtxt)_\mathit{term} \to^* (\overline{t},E)$.
 Not only the term $\overline{t}$ but also the term $u'$ in any
 sub-term $\app{t'}{u'}$ or $t'[x \leftarrow u']$ of the plugging
 $\plug{E}{\overline{t}}$ is a sub-term of the initial term
 $\overline{t_0}$.
 The transition (\ref{SAM:SPos}) renames a value in the way that
 preserves closedness and the well-named-ness of pluggings.
 In the transition (\ref{SAM:SOne}) where an explicit substitution for
 a bound variable $x$ is eliminated, the induction hypothesis ensures
 that the variable $x$ does not occur in the plugging
 $\plug{E_1}{\plug{A}{\plug{E_2}{\overline{v}}}}$.
\end{proof}

We now conclude with a brief consideration on \textit{garbage collection.}
 Transition (\ref{SAM:SOne}) eliminates an explicit substitution
 and therefore implements a partial mechanism of garbage collection.
 The mechanism is partial because only an explicit substitution that
 is looked up in an execution can be eliminated, as illustrated below.
 The explicit substitution $[x \leftarrow \abs{z}{}{z}]$ is eliminated
 in the first example, but not in the second example because the bound
 variable $x$ does not occur.
 \begin{align*}
  (\app{(\abs{x}{}{x})}{(\abs{z}{}{z})},\emptyCtxt)_\mathit{term}
  &\to^* (\abs{z}{}{z},\emptyCtxt)_\mathit{ctxt} \\
  (\app{(\abs{x}{}{\abs{y}{}{y}})}{(\abs{z}{}{z})},
  \emptyCtxt)_\mathit{term}
  &\to^* (\abs{y}{}{y},
  \emptyCtxt[x \leftarrow \abs{z}{}{z}])_\mathit{ctxt}
 \end{align*}
 We incorporate this partial garbage collection to make clear the
 behaviour of the DGoIM{\rewritesfirst}, in particular the use of the
 rewrite transitions (\ref{RW:CPos}) and (\ref{RW:COne}).

\subsection{Translation and Weak Simulation}

A weak simulation is built on top of translations of terms and
evaluation contexts.
The translations $(\cdot)^\dag$ are inductively defined in
Fig.~\ref{fig:TranslationTerms} and
Fig.~\ref{fig:TranslationEvalCtxts}.
What underlies them is the so-called
``call-by-value'' translation of intuitionistic logic to linear logic.
This translates all and only values to $\oc$-boxes that can be copied
by rewrite transitions.
\begin{figure}
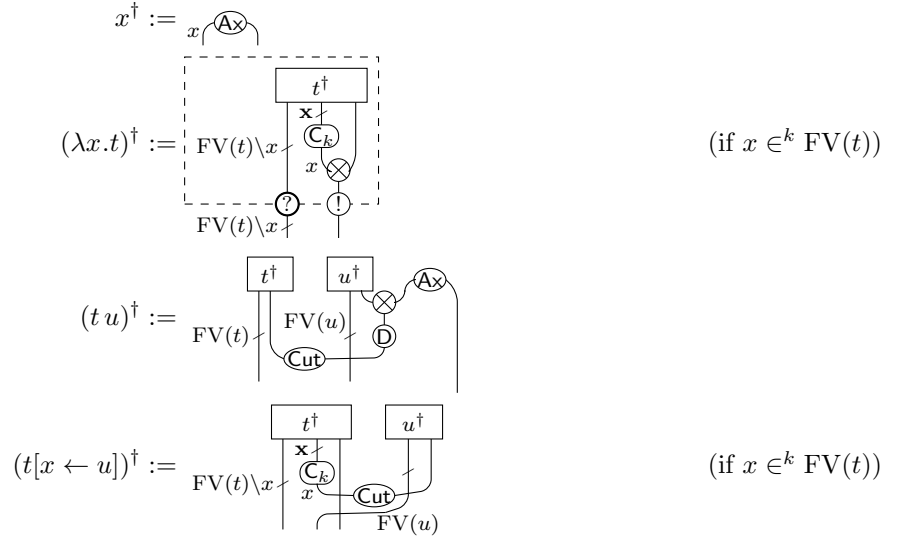

 \centering
 \begin{align*}
  x^\dag &:= \figTranslationTermsVar \\
  (\abs{x}{}{t})^\dag &:=
  \figTranslationTermsAbsCont \tag{if $x \in^k \FV(t)$} \\
  (\app{t}{u})^\dag &:= \figTranslationTermsApp \\
  (t[x \leftarrow u])^\dag &:=
  \figTranslationTermsESCont \tag{if $x \in^k \FV(t)$} \\
 \end{align*}
 \caption{Inductive Translation $(\cdot)^\dag$ of Terms to Well-boxed
 Graphs}
 \label{fig:TranslationTerms}
\end{figure}
\begin{figure}[ht]
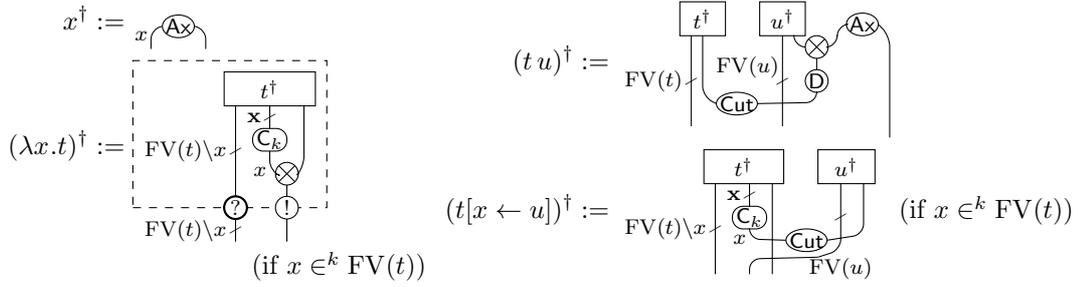

 \centering
 \begin{minipage}[t]{.4\hsize}
  \begin{align*}
   x^\dag &:= \figTranslationTermsVar \\
   (\abs{x}{}{t})^\dag &:=
   \figTranslationTermsAbsCont \tag{if $x \in^k \FV(t)$}
  \end{align*}  
 \end{minipage}%
 \begin{minipage}[t]{.5\hsize}
  \begin{align*}
   (\app{t}{u})^\dag &:= \figTranslationTermsApp \\
   (t[x \leftarrow u])^\dag &:=
   \figTranslationTermsESCont \tag{if $x \in^k \FV(t)$}
  \end{align*}  
 \end{minipage}
 \caption{Inductive Translation $(\cdot)_M^\dag$ of Evaluation Contexts to Graphs}
 \label{fig:TranslationEvalCtxts}
\end{figure}

The translation $\defTranslationsTerm$ of a term $t$ is a well-boxed
graph, where some edges are annotated with variables to help
understanding.
We continue representing a bunch of edges by a single edge and a
strike-out, with annotations denoted by a multiset, and a bunch of
nodes by a single thick node.
The translation $\defTranslationsEvalCtxt$ of an evaluation context
$E$, given a multiset $M$ of variables, is not a well-boxed graph
because it has incoming edges.
Lem.~\ref{lem:TranslationComposition} is analogous to
Lem.~\ref{lem:FVComposition}; their proof is by straightforward
induction on $E$.
\begin{lemma}[decomposition]
 \label{lem:TranslationComposition}
 \begin{align*}
  (\plug{E}{t})^\dag &= \lemTranslationCompositionTermConnected \\
  (\plug{E}{E'})_M^\dag &= \lemTranslationCompositionEvalCtxtConnected
 \end{align*}
\end{lemma}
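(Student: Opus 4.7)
The plan is to prove both equations simultaneously by structural induction on $E$, following the four-clause grammar of evaluation contexts. The base case $E = \emptyCtxt$ is essentially a definitional unfolding: one expects $(\emptyCtxt)_M^\dag$ to be the graph whose outgoing bunch of $M$-labelled edges is identified straight-through with its incoming bunch, so that ``connecting'' it to the translation of whatever fills the hole recovers precisely $t^\dag$ or $(E')_M^\dag$, which is exactly what both right-hand sides specialise to.

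For the inductive step I would case-split according to the grammar. The application case $E = \app{E_1}{\overline{u}}$ and the explicit-substitution case $E = E_1[x \leftarrow \overline{u}]$ are handled in parallel: in each, the hole of $E$ lies inside $E_1$, so the translation rule for $\app{\cdot}{\cdot}$ (respectively $\cdot[x \leftarrow \cdot]$) in Fig.~\ref{fig:TranslationTerms} places $(\plug{E_1}{t})^\dag$ or $(\plug{E_1}{E'})_M^\dag$ into the appropriate slot of the constructed graph, and the inductive hypothesis on $E_1$ rewrites that slot as the required connected graph. The fourth case $E = \plug{E_1}{x}[x \leftarrow E_2]$ is the delicate one: here the hole lies in $E_2$, while $\plug{E_1}{x}$ is a pure term sitting on the left of an explicit substitution whose bound variable $x$ is wired through the $\oc$-box arising from the translation of $E_2$. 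I apply the inductive hypothesis to $E_2$ and then verify that the $\mathsf{C}$- and $\wn$-node gadgets introduced for $x$ match those produced by unfolding the translation rule for explicit substitutions applied to $\plug{E_1}{x}[x \leftarrow \plug{E_2}{t}]$.

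The real work is combinatorial bookkeeping rather than conceptual novelty: one must carefully track which edges are grouped into the ``bunch'' notation, how they are labelled by multisets of free variables, and how Lem.~\ref{lem:FVComposition} governs the decorations $\FV_{\FV(t)}(E)$ and $\FV_{\FV_M(E')}(E)$ appearing on the right-hand sides of the two equations. The main obstacle, if any, lies in the fourth case, where the multiset contribution $\FV_{[x]}(E_1) \backslash x$ recorded by the definition of $\FV_M$ must line up exactly with the arity of the $\mathsf{C}_n$-node supplied by the translation of the explicit substitution, ensuring that the edges bound to $x$ on the left are neither lost nor duplicated when the graph is assembled on the right.
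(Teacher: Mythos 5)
Your proposal is correct and takes the same route as the paper, which dispatches this lemma with a one-line remark that the proof is ``by straightforward induction on $E$'' (in parallel with Lem.~\ref{lem:FVComposition}). Your case analysis over the four context constructors, and the attention you pay to the $\mathsf{C}$-node arities and multiset decorations in the $\plug{E_1}{x}[x \leftarrow E_2]$ case, is simply a faithful elaboration of that intended induction.
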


The translations $(\cdot)^\dag$ are lifted to a binary relation
between reachable configurations of the SAM and rooted states of the
DGoIM{\rewritesfirst}.
\begin{definition}[binary relation $\preceq$]
 A reachable configuration $c$ and a state $((G,\ell_G),p,h,m)$
 satisfies $c \preceq ((G,\ell_G),p,h,m)$ if and only if:
 \begin{itemize}
  \item \begin{math}
	 (G,p) =
	 \begin{cases}
	  \defSimulationTermConfig
	  &\text{(if $c = (\overline{t},E)_\mathit{term}$)} \\
	  \defSimulationCtxtConfig
	  &\parbox[c]{4cm}{
	  (if $\overline{v}^\dag = \defSimulationValue$ \\
	  and $c = (\overline{v},E)_\mathit{ctxt}$)}
	 \end{cases}
	\end{math}
  \item $\ell_G$ is an arbitrary naming
  \item $((G,\ell_G),p,h,m)$ is rooted at the unique open edge of~$G$.
 \end{itemize}
\end{definition}
\noindent
Note that the graph $G$ in the above definition has exactly one open
edge, because it is equal to the translation
$\plug{E}{\overline{t}}^\dag$ (Lem.~\ref{lem:TranslationComposition})
and the plugging $\plug{E}{\overline{t}}$ is closed
(Lem.~\ref{lem:SAMInvariants}).

The binary relation $\preceq$ gives a weak simulation, as stated
below.
It is weak in Milner's sense \cite{Milner89}, where transitions with
the label $\mathsf{o}$ are regarded as internal.
We can conclude from Thm.~\ref{thm:Simulation} below that the
DGoIM{\rewritesfirst} soundly implements the call-by-need evaluation.
\begin{theorem}[weak simulation]
 \label{thm:Simulation}
 Let a configuration $c$ and a state $s$ satisfy $c \preceq s$.
 \begin{enumerate}
  \item If a transition $c \to_\mathsf{b} c'$ of the SAM is possible,
	there exists a sequence
	$s \rightarrowtriangle_\mathsf{o}^2
	\rightarrowtriangle_\mathsf{b}
	\rightarrowtriangle_\mathsf{o} s'$
	such that $c' \preceq s'$.
  \item If a transition $c \to_\mathsf{s} c'$ of the SAM is possible,
	there exists a sequence
	$s \rightarrowtriangle_\mathsf{s}
	\rightarrowtriangle_\mathsf{o} s'$
	such that $c' \preceq s'$.
  \item If a transition $c \to_\mathsf{o} c'$ of the SAM is possible,
	there exists a sequence
	$s \rightarrowtriangle_\mathsf{o}^N s'$
	such that $0 < N \leq 4$ and $c' \preceq s'$.
  \item No transition $\rightarrowtriangle$ is possible at the state
	$s'$ if $c' = (\overline{v},A)_\mathit{ctxt}$.
 \end{enumerate}
\end{theorem}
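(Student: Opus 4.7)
The plan is to proceed by case analysis on which of the six SAM transitions applies to $c$, using Lem.~\ref{lem:TranslationComposition} in each case to decompose the graph underlying $s$ so that the sub-graphs translating the sub-terms and sub-contexts affected by the SAM step are made explicit. The relation $c \preceq s$ rigidly constrains both the named graph (up to naming) and the token position of $s$ from the configuration $c$, so the shape of the nodes immediately surrounding the token is determined by whether $c$ is a term or a context configuration and by the head constructor of its active component. For each case I would (a) decompose the underlying graph, (b) identify the shape of nodes around the token, which determines which pass or rewrite transition fires first under the rewrites-first strategy, (c) run the prescribed number of DGoIM transitions, and (d) re-assemble the result graph via Lem.~\ref{lem:TranslationComposition} to verify that it is the translation of the successor configuration with the token at the canonical edge.

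Part~3 is the easiest: the three $\mathsf{o}$-transitions of the SAM rearrange the plugging without changing $\plug{E}{\overline{t}}$, so by Lem.~\ref{lem:TranslationComposition} the underlying graph stays fixed and the token simply walks through the handful of generators produced by the translation of the context frame being exposed, giving $0 < N \leq 4$. Parts~1 and~2 are the substantive ones. In Part~1 the context frame $\plug{E}{\app{A}{\overline{u}}}$ translates at the head to a $\otimes$-node over a $\mathsf{Cut}$-node above the $\oc$-box of $\abs{x}{}{\overline{t}}$; the two preparatory $\mathsf{o}$-passes cross the $\otimes$-node and the $\mathsf{Cut}$, the history stack then matches the premise of (\ref{RW:MultL}) or (\ref{RW:MultR}) which fires the $\mathsf{b}$-rewrite, and a final $\mathsf{o}$-pass enters the translation of $\overline{t}$. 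In Part~2 the translation of $\plug{E_1}{\plug{E_2}{x}[x \leftarrow A]}$ exposes a $\mathsf{Cut}$ above a $\mathsf{C}_{n+1}$ (when $x \in^{n+1} \FV_\emptyset(E_2)$), so (\ref{RW:CPos}) fires and produces a renamed copy $H^\approx$ of the $\oc$-box of $\overline{v}$; choosing the fresh names to match the fresh bound variables used by the SAM yields the translation of $\overline{v}^\approx$ on the nose. The sub-case $x \notin \FV_\emptyset(E_2)$ uses (\ref{RW:COne}) and matches SAM:SOne. Part~4 follows by inspection: in a state mirroring $(\overline{v},A)_\mathit{ctxt}$ the token sits at the principal door of a $\oc$-box whose outgoing edge leads, through $A^\dag$, to the unique open edge of the graph, so no pass transition is enabled; and by Lem.~\ref{lem:DGoIMInvariants}(1) the history stack, built by walking only through the $\oc$-box and through $A^\dag$, contains no $\mathsf{Cut}$ that could trigger any rewrite.

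The main obstacle will be the bookkeeping that connects the global decomposition given by Lem.~\ref{lem:TranslationComposition} to the strictly local triggering conditions of the DGoIM rewrites, which are stated in terms of only the top few elements of the history stack. In Parts~1 and~2 one must show that, under the rewrites-first interleaving, the preparatory pass transitions are really forced (no rewrite is enabled earlier) and that when the main rewrite fires the top of the history stack has exactly the shape prescribed in Fig.~\ref{fig:RewriteTransitions}; this seems to require an auxiliary invariant tracking which prefix of the history stack comes from walking through $E^\dag$ versus from the immediate frame around the redex. A further subtlety lies in (\ref{RW:CPos}): its parameters $(l,m,f,g)$ encode how the $\mathsf{C}$-nodes on the two copies of $\oc$-doors are to be rewired, and one must check that the way the translation aggregates $\mathsf{C}$-nodes inside $E^\dag$ produces precisely the data needed so that the rewritten graph coincides with the translation of the SAM's updated configuration.
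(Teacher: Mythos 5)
Your overall strategy --- case analysis on the six SAM transitions, decomposing $s$'s graph via the translation lemmas, firing the locally determined DGoIM transitions, and reassembling --- is exactly the paper's, which presents the six cases as a figure of graph equations and transition chains. However, there is one genuine gap and one factual error in the step accounting. The gap is in Part~2: you correctly flag that one ``must check that the way the translation aggregates $\mathsf{C}$-nodes inside $E^\dag$ produces precisely the data needed'' by the parameters $(l,m,f,g)$ of (\ref{RW:CPos}), but your proof leaves this unresolved, and it is where the real work lives. Lemma~\ref{lem:TranslationComposition} alone does not expose the $\mathsf{C}$-nodes sitting between the $\oc$-box of $\overline{v}$ and the $\wn$-doors scattered through $E_2^\dag$ and its copy; the paper supplies a separate decomposition lemma (Lem.~\ref{lem:DecomposeTranslationEvalCtxt}) whose third part is proved by induction on the support of the captured-variable multiset $M_0$, splitting $E$ at the unique frame $E_2[x\leftarrow t]$ capturing each $x$ and using well-namedness (Lem.~\ref{lem:SAMInvariants}) to guarantee uniqueness of that frame. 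Without this lemma (or an equivalent invariant), you cannot verify that the graph produced by (\ref{RW:CPos}) equals $(\plug{E_1}{\plug{A}{E_2[x\leftarrow\overline{v}]}})$ applied to $\overline{v}^\approx$, so Part~2 does not close as written.

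The factual error is in Part~1. You claim the two preparatory $\mathsf{o}$-steps are \emph{passes} crossing the $\otimes$-node and the $\mathsf{Cut}$. Those crossings were already consumed while simulating (\ref{SAM:O1}): the relation $\preceq$ for a context configuration places the token at the principal door of the value's $\oc$-box with $\oc_\alpha{:}\mathsf{Cut}_\beta{:}\mathsf{D}_\gamma$ on top of the history, so under rewrites-first the machine has no choice but to fire the box-opening rewrite (\ref{RW:D}) as the first $\mathsf{o}$-step, then pass the $\parr$-node of the $\lambda$, then fire (\ref{RW:MultL})/(\ref{RW:MultR}) as the $\mathsf{b}$-step; the final $\mathsf{o}$-step is likewise a rewrite (an axiom elimination), not a pass. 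This matters because your stated ``main obstacle'' --- showing the preparatory passes are forced --- dissolves once the token position is pinned down correctly: determinism of rewrites-first plus the history-stack shape leaves exactly one enabled transition at each stage. A similar omission occurs in Part~2, where the trailing $\rightarrowtriangle_\mathsf{o}$ is the rewrite (\ref{RW:Ax2}) cleaning up the $\mathsf{Cut}$--$\mathsf{Ax}$ pair left behind by (\ref{RW:CPos}) or (\ref{RW:COne}); your account stops after the copy and does not produce a state related by $\preceq$ to $c'$.
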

\begin{proof}
We start with two basic observations.
First, only a pass transition is possible at a
state $s$ that satisfies $(\overline{t},E)_\mathit{term} \preceq s$,
and second, no transition is possible at a state $s$ that satisfies
$(\overline{v},A)_\mathit{ctxt} \preceq s$.

Fig.~\ref{fig:SimulationIllustrated} shows how the
DGoIM{\rewritesfirst} simulates each transition of the SAM.
Stacks, annotations of edges, and some annotations of
$\mathsf{C}$-nodes are omitted.
The equations in Fig.~\ref{fig:SimulationIllustrated} apply the
decomposition properties in Lem.~\ref{lem:TranslationComposition} as
well as the other decomposition properties in the following
Lem.~\ref{lem:DecomposeTranslationEvalCtxt}.
In the application, we exploit the closedness and well-named-ness of
reachable configurations (in the sense of
Lem.~\ref{lem:SAMInvariants}).
\begin{figure*}[p]
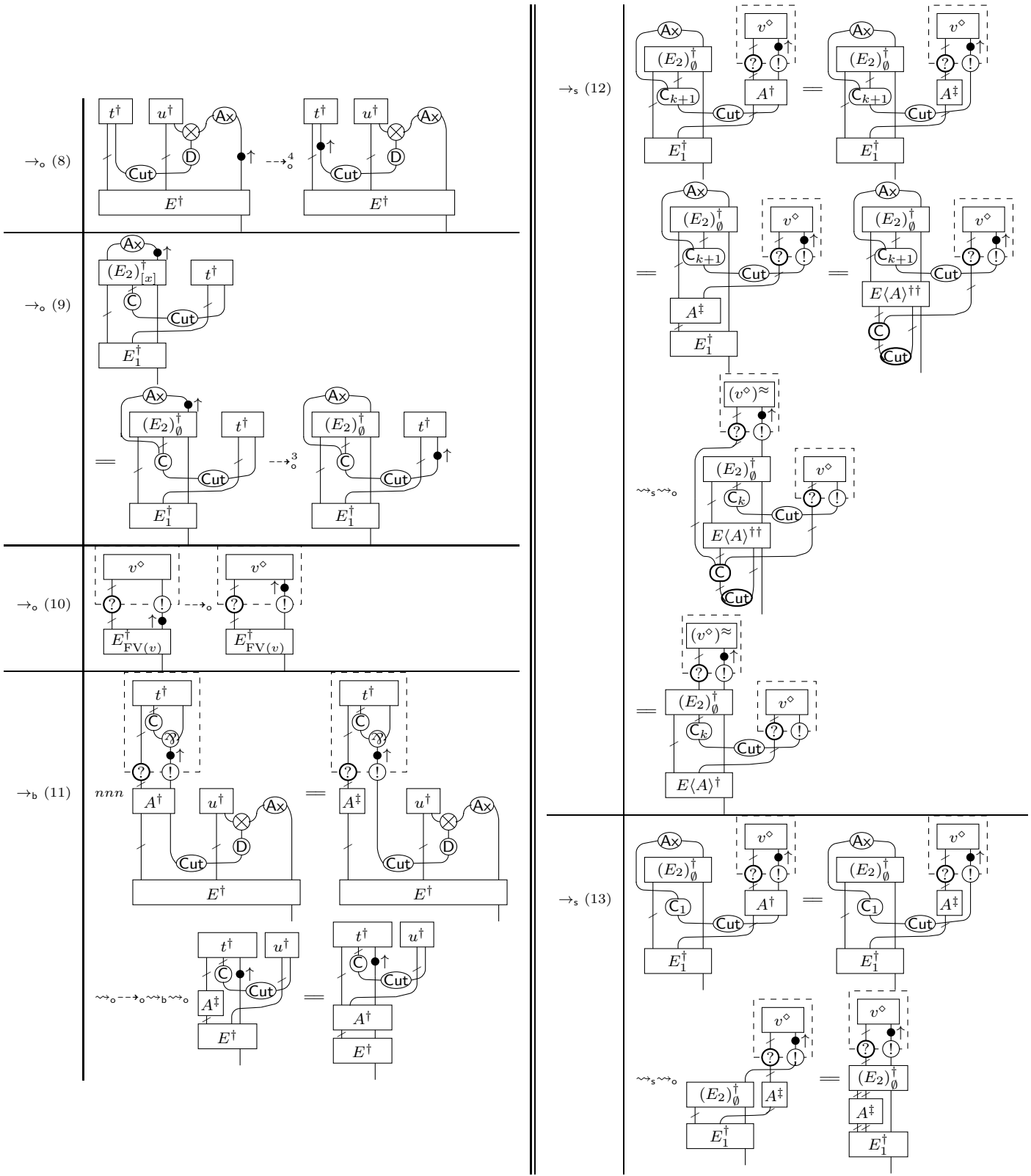

	\scriptsize
 \begin{tabular}{l||l}
  \begin{tabular}{r|l}
   $\to_\mathsf{o}$ (\ref{SAM:O1})
   & $\illustrateSimulationCOneFirst
   \dashrightarrow_\mathsf{o}^4
   \illustrateSimulationCOneSecond$ \\ \hline
   $\to_\mathsf{o}$ (\ref{SAM:O2})
   & $\illustrateSimulationCTwoFirst$ \\
   & $\longequal \illustrateSimulationCTwoSecond
   \dashrightarrow_\mathsf{o}^3
   \illustrateSimulationCTwoThird$ \\ \hline
   $\to_\mathsf{o}$ (\ref{SAM:O3})
   & $\illustrateSimulationCThreeFirst
   \dashrightarrow_\mathsf{o}
   \illustrateSimulationCThreeSecond$ \\ \hline
   $\to_\mathsf{b}$ (\ref{SAM:B})
   & $\illustrateSimulationMFirst
   \longequal \illustrateSimulationMSecond$ \\
   & $\rightsquigarrow_\mathsf{o}
   \dashrightarrow_\mathsf{o}
   \rightsquigarrow_\mathsf{b} \rightsquigarrow_\mathsf{o}
   \illustrateSimulationMThird
   \longequal \illustrateSimulationMFourth$
  \end{tabular}
  &
  \begin{tabular}{r|l}
   $\to_\mathsf{s}$ (\ref{SAM:SPos})
   & $\illustrateSimulationEOneFirst
   \longequal \illustrateSimulationEOneSecond$ \\
   & $\longequal \illustrateSimulationEOneThird
   \longequal \illustrateSimulationEOneFourth$ \\
   & $\rightsquigarrow_\mathsf{s} \rightsquigarrow_\mathsf{o}
   \illustrateSimulationEOneFifth$ \\
   & $\longequal \illustrateSimulationEOneSixth$ \\ \hline
   $\to_\mathsf{s}$ (\ref{SAM:SOne})
   & $\illustrateSimulationETwoFirst
   \longequal \illustrateSimulationETwoSecond$ \\
   & $\rightsquigarrow_\mathsf{s} \rightsquigarrow_\mathsf{o}
   \illustrateSimulationETwoThird
   \longequal \illustrateSimulationETwoFourth$
  \end{tabular}
 \end{tabular}
 \caption{Illustration of Simulation ($k > 0$)}
 \label{fig:SimulationIllustrated}
\end{figure*}
\end{proof}

\begin{lemma}[decomposition]
 \label{lem:DecomposeTranslationEvalCtxt}
 \noindent
 Let $M_0,M$ be multisets of variables.
 \begin{enumerate}
  \item The translation $A_M^\dag$ of a substitution context $A$ has a
	unique decomposition $\lemDecomposeTranslationSubstCtxt$.
  \item If no variables in $M_0$ are captured in an
	evaluation context $E$, the translation
	$E_{M_0 + M}^\dag$ is equal to the graph
	$\lemDecomposeTranslationEvalCtxtNotBound$.
  \item If each variable in $M_0$ is captured in an evaluation
	context $E$ exactly once, the translation
	$E_{M_0 + M}^\dag$ has a unique decomposition
	$\lemDecomposeTranslationEvalCtxtBound$.
	The multiset $M_1$ satisfies
	$\mathrm{supp}(M_1) \subseteq \mathrm{supp}(M_0)$, and the
	thick $\mathsf{C}_{M_0 + M_1}$-node represents a family
	$\{ \mathsf{C}_{M_0(x)+M_1(x)} \}_{x \in \mathrm{supp}(M_0)}$
	of $\mathsf{C}$-nodes.
 \end{enumerate}
\end{lemma}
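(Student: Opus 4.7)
The plan is to prove all three statements by structural induction, part 1 on the substitution context $A$ and parts 2 and 3 on the evaluation context $E$, following the inductive cases of the translation $(\cdot)_M^\dag$ in Fig.~\ref{fig:TranslationEvalCtxts}. The decompositions asserted by the lemma are exactly what one sees when unfolding the translation one context-constructor at a time, so each inductive step reduces to applying the induction hypothesis to the smaller context and re-gluing via Lem.~\ref{lem:TranslationComposition}.

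For part 1, I induct on $A$. The base case $A = \emptyCtxt$ is immediate: $\emptyCtxt_M^\dag$ is just $M$ identity wires feeding the hole, so the substitution cluster in the claimed decomposition is empty. For $A = A'[x \leftarrow \overline{t}]$, the translation introduces a $\mathsf{Cut}$-node connected to a $\oc$-box around $\overline{t}^\dag$ together with contraction nodes determined by the multiplicity of $x$ in $\FV_M(A')$. Combining this with the induction hypothesis on $A'$ yields the claimed shape; uniqueness follows because each $\mathsf{Cut}$/$\oc$-box pair in the decomposition corresponds bijectively to a syntactic explicit-substitution constructor of $A$.

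For parts 2 and 3, I induct on $E$. The base case $E = \emptyCtxt$ satisfies the no-capture hypothesis vacuously and forces $M_0$ to be empty under the captured-exactly-once hypothesis, so both decompositions collapse to the trivial one. In the inductive step the three non-trivial constructors are handled as follows. For $\app{E'}{\overline{t}}$ no new binder is introduced, so the induction hypothesis applies directly to $E'$ on a multiset enlarged by $\FV(\overline{t})$. For $E'[x \leftarrow \overline{t}]$, the only constructor that \emph{captures} a variable, namely $x$: for part 2 the hypothesis forces $x$ to be absent from $M_0$, so the induction hypothesis applies to $E'$ unchanged; for part 3 I split on whether $x$ belongs to $M_0$ and use part 1 on the $\oc$-box enclosing $\overline{t}^\dag$ to absorb the binder into the substitution-context component of the decomposition, then invoke the induction hypothesis on $E'$ with correspondingly adjusted multisets. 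For $\plug{E'}{y}[y \leftarrow E'']$, Lem.~\ref{lem:TranslationComposition} factors the translation into contributions from $E'$ (on $[y]$) and $E''$ (on $M_0+M$), to which the appropriate induction hypotheses apply.

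The main obstacle I expect is the bookkeeping of the multisets $M_0$ and $M_1$ through the $E'[x \leftarrow \overline{t}]$ and $\plug{E'}{y}[y \leftarrow E'']$ cases of part 3. I need to check that $\mathrm{supp}(M_1) \subseteq \mathrm{supp}(M_0)$ is maintained — intuitively because every extra incoming edge into a captured-variable contraction arises from another use of that same captured variable deeper inside $E$ — and that the $\mathsf{C}$-node families arising from nested captures match the family $\{\mathsf{C}_{M_0(x)+M_1(x)}\}_{x \in \mathrm{supp}(M_0)}$ stipulated by the lemma. Uniqueness of the decomposition then follows because each $\oc$-box and each captured variable corresponds to a unique syntactic occurrence in $E$, which rigidly determines the shape of the decomposition.
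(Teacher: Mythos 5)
Your plan for parts 1 and 2 coincides with the paper's (straightforward structural inductions on $A$ and $E$). For part 3, however, you take a genuinely different route: you induct on the structure of $E$, whereas the paper inducts on the size of $\mathrm{supp}(M_0)$. The paper picks any $x \in M_0$, uses the captured-exactly-once hypothesis to factor $E$ uniquely as $\plug{E_1}{E_2[x \leftarrow t]}$ with $x$ captured in neither $E_1$ nor $E_2$, splits $M_0$ into the part captured in $E_1$, the part captured in $E_2$, and the $x$-part, and then finishes with one application of part 2 (for the wires whose variables are not captured) and the induction hypothesis on a strictly smaller support. The benefit of that scheme is that every appeal to the induction hypothesis is to a context satisfying one of the two clean hypotheses (``none captured'' or ``all captured exactly once''). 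Your structural induction instead faces a mixed situation in the case $E = E'[x \leftarrow \overline{t}]$ with $x \in \mathrm{supp}(M_0)$: inside $E'$ the variable $x$ is no longer captured while the remaining variables of $M_0$ still are, so neither part 2 nor part 3 applies to $E'$ with the multiset $M_0$ as given; you must move the $M_0(x)$ copies of $x$ from $M_0$ into the pass-through multiset $M$ before invoking the induction hypothesis, then re-route those wires through the freshly introduced $\mathsf{C}$-node and $\oc$-box and verify that the arity of that node is indeed $M_0(x) + M_1(x)$ with the extra $M_1(x)$ edges coming from occurrences of $x$ in the pure terms of $E'$. You correctly identify this bookkeeping as the main obstacle; it can be carried through, but the paper's induction on $\mathrm{supp}(M_0)$ is designed precisely to sidestep it. One small imprecision: in the case $\app{E'}{\overline{t}}$ the induction hypothesis should be applied to $E'$ with the same multiset $M_0 + M$ --- the enlargement by $\FV(\overline{t})$ affects only the outgoing free-variable edges of the assembled translation, not the parameter passed to $E'$.
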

\begin{proof}
 The proofs for 1.\ and 2.\ are by straightforward inductions on $A$
 and $E$ respectively.
 The proof for 3.\ is by induction on the dimension of $M_0$, i.e.\
 the size of the support set $\mathrm{supp}(M_0)$.
 The base case where $M_0 = \emptyset$ is obvious.
 In the inductive case, let $x$ satisfy $x \in M_0$.
 Since $x$ is captured exactly once in $E$ by assumption, the
 evaluation context $E$ can be decomposed as
 $E = \plug{E_1}{E_2[x \leftarrow t]}$ such that $x$ is not captured
 in $E_1$ or $E_2$.
 The evaluation context $E_2$ satisfies $x \in^k \FV_{M_0+M}(E_2)$ for
 some positive multiplicity $k$.
 Moreover the multiset $M_0$ can be decomposed as
 $M_0 = M_1 + \mathbf{x} + M_2$ such that all the variables in $M_1$
 (resp.\ $M_2$) are only captured in $E_1$ (resp.\ $E_2$).
 The translation $E_{M_0 + M}^\dag$ can be decomposed as in Fig.~\ref{fig:decomp}.
\begin{figure}
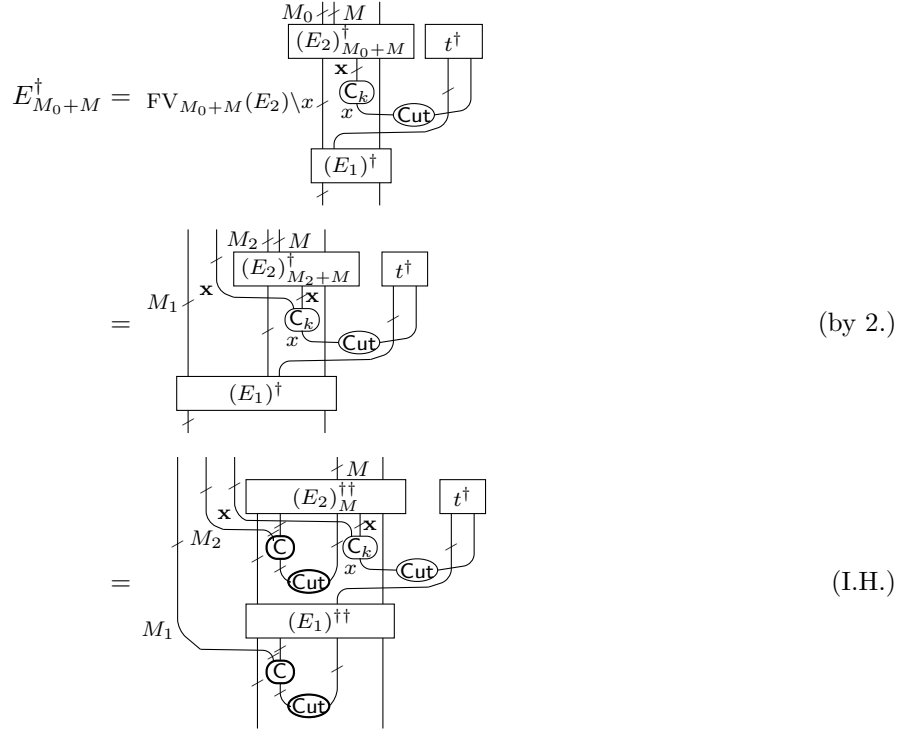

 
 \begin{align*}
  E_{M_0+M}^\dag &= \lemDecomposeTranslationEvalCtxtProofFirst \\
  &= \lemDecomposeTranslationEvalCtxtProofSecond \tag{by 2.} \\
  &= \lemDecomposeTranslationEvalCtxtProofThird \tag{I.H.}
 \end{align*}
\caption{Decomposing $E_{M_0 + M}^\dag$}
\label{fig:decomp}
\end{figure}
 Finally, we let $E_M^{\dag\dag}$ consist of
 $(E_2)_M^{\dag\dag},t^\dag,(E_1)^{\dag\dag}$.
\end{proof}


\section{Time Cost Analysis of Rewrites-First Interleaving}

\subsection{Recipe for Time Cost Analysis}

Our time cost analysis of the DGoIM{\rewritesfirst} follows
Accattoli's recipe, described in \cite{AccattoliBM14,Accattoli16}, of
analysing complexity of abstract machines.
This section recalls the recipe and explains how it applies to the
DGoIM{\rewritesfirst}.

The time cost analysis focuses on how efficiently an abstract machine
implements an evaluation strategy.
In other words, we are not interested in minimising the number of
$\beta$-reduction steps simulated by an abstract machine.
Our interest is in making the number of transitions of an
abstract machine ``reasonable,'' compared to the number of necessary
$\beta$-reduction steps determined by a given evaluation strategy.

Accattoli's recipe assumes that an abstract machine has three groups
of transitions: 1) ``$\beta$-transitions'' that correspond to
$\beta$-reduction in which substitution is delayed, 2) transitions
perform substitution, and 3) other ``overhead'' transitions.
We incorporate this classification using the labels
$\mathsf{b},\mathsf{s},\mathsf{o}$ of transitions.

Another assumption of the recipe is that, each step of
$\beta$-reduction is simulated by a single transition of an abstract
machine, and so is substitution of each occurrence of a variable.
This is satisfied by many known abstract machines including the SAM,
however not by the DGoIM{\rewritesfirst}.
The DGoIM{\rewritesfirst} has ``finer'' transitions and can take
several transitions to simulate a single step of reduction (hence a
single transition of the SAM, as we can observe in
Thm.~\ref{thm:Simulation}).
In spite of this mismatch we can still follow the recipe, thanks to
the weak simulation $\preceq$.
It discloses what transitions of the DGoIM exactly correspond to
$\beta$-reduction and substitution, and gives a concrete number of
overhead transitions that the DGoIM{\rewritesfirst} needs to simulate
$\beta$-reduction and substitution. The recipe for the time cost analysis is:
\begin{enumerate}
 \item Examine the number of transitions, by means of the size of
       input and the number of $\beta$-transitions.
 \item Estimate time cost of single transitions.
 \item Derive a bound of the overall execution time cost.
 \item Classify an abstract machine according to its execution time
       cost.
\end{enumerate}
The last step is accompanied by the following taxonomy of abstract
machines introduced in \cite{Accattoli16}.
\begin{definition}
 [classes of abstract machines {\cite[Def.~7.1]{Accattoli16}}]
 \label{def:taxonomy}
 \noindent
 \begin{enumerate}
  \item An abstract machine is \emph{efficient} if its execution time
	cost is linear in both the input size and the number of
	$\beta$-transitions.
  \item An abstract machine is \emph{reasonable} if its execution time
	cost is polynomial in the input size and the number of
	$\beta$-transitions.
  \item An abstract machine is \emph{unreasonable} if it is not
	reasonable.
 \end{enumerate}
\end{definition}

The input size in our case is given by the \emph{size} $|t|$ of a term
$t$, inductively defined by:
\begin{align*}
 |x| &:= 1
 & |\abs{x}{}{t}| &:= |t| + 1 \\
 |\app{t}{u}| &:= |t| + |u| + 1
 & |t[x \leftarrow u]| &:= |t| + |u| + 1.
\end{align*}
Given a sequence $r$ of transitions (of either the SAM or the
DGoIM{\rewritesfirst}), we denote the number of transitions with a
label $\mathsf{x}$ in $r$ by $|r|_\mathsf{x}$.
Since we use the fixed set $\{ \mathsf{b},\mathsf{s},\mathsf{o} \}$ of
labels, the length $|r|$ of the sequence $r$ is equal to the sum
$|r|_\mathsf{b} + |r|_\mathsf{s} + |r|_\mathsf{o}$.

\subsection{Number of Transitions}

We first estimate the number of transitions of the SAM, and then
derive estimation for the DGoIM{\rewritesfirst}.
\begin{lemma}[quantitative bounds for SAM]
 \label{lem:SAMBounds}
 Each execution $e$ from an initial configuration
 $(\overline{t_0},E)_\mathit{term}$, comes with the following
 inequalities:
 \begin{align*}
  |e|_\mathsf{s} &\leq |e|_\mathsf{b} \\
  |e|_\mathsf{o}
  &\leq |\overline{t_0}| \cdot (5 \cdot |e|_\mathsf{b} + 2)
  + (3 \cdot |e|_\mathsf{b} + 1).
 \end{align*}
\end{lemma}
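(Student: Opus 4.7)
The plan is to prove both inequalities by induction on the length $|e|$ of the execution, maintaining numerical invariants on reachable configurations. By Lemma~\ref{lem:SAMInvariants} every reachable focus $\overline{t}$ is (up to $\alpha$-equivalence) a subterm of $\overline{t_0}$, and the plugging is closed and well-named; both bounds lean on this fact.

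For the first inequality $|e|_\mathsf{s} \leq |e|_\mathsf{b}$, I would track $n(c)$, the number of explicit substitutions present in a reachable configuration $c$. A direct inspection of the SAM rules shows that (\ref{SAM:B}) increases $n$ by one, (\ref{SAM:SOne}) decreases $n$ by one, and (\ref{SAM:SPos}) preserves $n$ (it reshapes $[x\leftarrow A]$ into $[x\leftarrow\overline{v}]$ but keeps the outer binder). Since $n$ starts at $0$ in the initial configuration and never goes negative, we immediately obtain a bound on the number of (\ref{SAM:SOne}) steps by $|e|_\mathsf{b}$. Extending the bound to (\ref{SAM:SPos}) requires a charging argument: each (\ref{SAM:SPos}) arrives in ctxt mode with a value sitting at the hole of some binding, and each such arrival is traceable back through a (\ref{SAM:O2}),(\ref{SAM:O3}) sequence to a $\mathsf{b}$-transition that installed that binding in the first place. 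The side condition $x\in\FV_\emptyset(E_2)$ is the key to making this trace injective on $\mathsf{b}$-transitions.

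For the second inequality, I would partition the execution into blocks, each consisting of a maximal run of $\mathsf{o}$-transitions terminated by a $\mathsf{b}$- or $\mathsf{s}$-transition (with an initial block at the start and a final block at the end). Within a block the machine is navigating its focus: (\ref{SAM:O1}) peels off an application, (\ref{SAM:O2}) dereferences a variable and resets the focus to a subterm of $\overline{t_0}$, and (\ref{SAM:O3}) switches mode. Using Lemma~\ref{lem:SAMInvariants} each focus has syntactic size at most $|\overline{t_0}|$, and a case analysis on the possible shapes of the navigation shows that each block contributes at most $5|\overline{t_0}|+3$ overhead transitions (before a $\mathsf{b}$- or $\mathsf{s}$-transition), with an additional $2|\overline{t_0}|+1$ absorbed by the initial and final blocks. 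Summing over the $|e|_\mathsf{b}+|e|_\mathsf{s}$ blocks and substituting the first inequality to eliminate $|e|_\mathsf{s}$ delivers the claimed linear expression.

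The hard part will be the charging argument for (\ref{SAM:SPos}). A single binding can in principle trigger (\ref{SAM:SPos}) several times, once per occurrence of the bound variable, so the naive count is governed by the total number of variable occurrences rather than by $|e|_\mathsf{b}$ alone. Getting the tight bound demands exploiting the precise call-by-need discipline: the side conditions on $\FV_\emptyset(E_2)$ together with well-named-ness force (\ref{SAM:SPos}) to witness the \emph{first} completion of evaluation of a binding, which happens exactly once per earlier (\ref{SAM:B}). A secondary, more mechanical, obstacle is calibrating the constants $5$, $3$, $2$ and $1$, which requires a careful worst-case split on the shape of a navigation block and on whether it ends in a $\mathsf{b}$- or an $\mathsf{s}$-transition.
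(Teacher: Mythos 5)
Your proposal for the first inequality has two concrete gaps. First, even if both of your sub-bounds held, bounding the number of (\ref{SAM:SOne})-steps by $|e|_\mathsf{b}$ (via the substitution counter) and the number of (\ref{SAM:SPos})-steps by $|e|_\mathsf{b}$ (via a charge to $\mathsf{b}$-steps) only yields $|e|_\mathsf{s}\leq 2\cdot|e|_\mathsf{b}$, not the claimed $|e|_\mathsf{s}\leq|e|_\mathsf{b}$; you need a single injection from \emph{all} $\mathsf{s}$-transitions into $\mathsf{b}$-transitions. The paper instead charges every $\mathsf{s}$-transition (and every (\ref{SAM:O2})) to the explicit substitution it acts on, and observes that only $|e|_\mathsf{b}$ explicit substitutions are ever created, since only (\ref{SAM:B}) creates them and (\ref{SAM:SPos}) copies only pure values. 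Second, your charging claim for (\ref{SAM:SPos}) is backwards: the side condition $x\in\FV_\emptyset(E_2)$ says that \emph{further} occurrences of $x$ remain in $E_2$, so (\ref{SAM:SPos}) fires at every completed dereference of a binding \emph{except the last}, not ``exactly once per earlier (\ref{SAM:B}).'' A binding whose variable has several needed occurrences is the subject of several $\mathsf{s}$-transitions. This is precisely the delicate point: for $\app{(\abs{x}{}{\app{x}{x}})}{(\abs{z}{}{z})}$ the binding $[x\leftarrow\abs{z}{}{z}]$ is looked up twice --- once directly, and once through the renaming binding $[z'\leftarrow x]$ installed when the copied value $\abs{z'}{}{z'}$ is applied to the second occurrence of $x$ --- yielding one (\ref{SAM:SPos}) and two (\ref{SAM:SOne})-steps against only two (\ref{SAM:B})-steps. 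Any correct accounting must engage with such renaming chains, and your charging scheme (and, it must be said, the paper's own one-line justification that each explicit substitution is ``concerned'' at most once) does not.

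For the second inequality your plan is close in spirit to the paper's --- both use Lemma~\ref{lem:SAMInvariants} to bound a maximal descent by (\ref{SAM:O1}) through a subterm of $\overline{t_0}$ by $|\overline{t_0}|$, and both bound (\ref{SAM:O3}) by the $\mathsf{b}$- or $\mathsf{s}$-transition that must follow it --- but your blocks are maximal runs of \emph{all} $\mathsf{o}$-transitions, and a single such run is not bounded by $O(|\overline{t_0}|)$: a chain of variable-to-variable bindings produces consecutive (\ref{SAM:O2})-steps, each of which can restart a fresh (\ref{SAM:O1})-descent of length up to $|\overline{t_0}|$, so the per-block constant $5|\overline{t_0}|+3$ cannot be established by inspecting one block, and summing your per-block bounds would give a quadratic rather than bilinear estimate. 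The paper avoids this by taking the maximal runs to consist of (\ref{SAM:O1}) only, and by counting (\ref{SAM:O2})-steps globally via the charge to explicit substitutions.
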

\begin{proof}
 The proof is analogous to the discussion in
 \cite[Sec.~11]{AccattoliBM14}.
 Let $|e|^{(\ref{SAM:O1})}$,
 $|e|^{(\ref{SAM:O2})}$ and
 $|e|^{(\ref{SAM:O3})}$ be the numbers of transitions
 (\ref{SAM:O1}), (\ref{SAM:O2}) and (\ref{SAM:O3}) in $e$,
 respectively.
 The number $|e|_\mathsf{o}$ is equal to the sum of these three
 numbers.
 
 The transition (\ref{SAM:B}) introduces one explicit
 substitution, and the other transitions never increase the
 number of explicit substitution.
 In particular the transition (\ref{SAM:SPos}) copies a pure value,
 that means no explicit substitutions are copied.
 Therefore we can bound the number of transitions that concern
 explicit substitutions, and obtain
 $|e|^{(\ref{SAM:O2})} \leq |e|_\mathsf{b}$ and
 $|e|_\mathsf{s} \leq |e|_\mathsf{b}$.

 Each occurrence of the transition (\ref{SAM:O3}) in an execution is
 either the last transition of the execution or followed by
 the transitions (\ref{SAM:B}), (\ref{SAM:SPos}) and (\ref{SAM:SOne}).
 This yields the inequality
 $|e|^{(\ref{SAM:O3})} \leq |e|_\mathsf{b} + |e|_\mathsf{s} + 1$
 and hence
 $|e|^{(\ref{SAM:O3})} \leq 2 \cdot |e|_\mathsf{b} + 1$.

 The transition (\ref{SAM:O1}) reduces the size of a pure term that is
 the first component of a configuration.
 The pure term is always a sub-term of the initial term
 $\overline{t_0}$ (Lem.~\ref{lem:SAMInvariants}).
 This means each maximal subsequence of an execution that solely
 consists of the transition (\ref{SAM:O1}) has at most the length
 $|\overline{t_0}|$.
 Such a maximal subsequence either occurs at the end of an execution
 or is followed by transitions other than the transition
 (\ref{SAM:O1}).
 Therefore the number of these maximal sub-sequences is no more than
 $|e|_\mathsf{b} + |e|_\mathsf{s}
 + |e|^{(\ref{SAM:O2})} + |e|^{(\ref{SAM:O3})} + 1$
 that can be bounded by $5 \cdot |e|_\mathsf{b} + 2$.
 A bound of the number $|e|^{(\ref{SAM:O1})}$ can be given
 by multiplying these two bounds, namely we obtain
 $|e|^{(\ref{SAM:O1})} \leq
 |\overline{t_0}| \cdot (5 \cdot |e|_\mathsf{b} + 2)$.
\end{proof}

Combining these bounds for the SAM with the weak simulation $\preceq$,
we can estimate the number of transitions of the DGoIM{\rewritesfirst}
as below.
\begin{proposition}[quantitative bounds for DGoIM{\rewritesfirst}]
 \label{prop:DGoIMBounds}
 Let $r \colon s_0 \rightarrowtriangle^* s$ be a sequence of
 transitions of the DGoIM{\rewritesfirst}.
 If there exists an execution
 $(\overline{t_0},\emptyCtxt)_\mathit{term} \to^* (\overline{t},E)$ of
 the SAM such that
 $s_0 \preceq (\overline{t_0},\emptyCtxt)_\mathit{term}$ and
 $s \preceq (\overline{t},E)$,
 the sequence $r$ comes with the following inequalities:
 \begin{align*}
  |r|_\mathsf{s} &\leq |r|_\mathsf{b} \\
  |r|_\mathsf{o}
  &\leq 4 \cdot |\overline{t_0}| \cdot (5 \cdot |r|_\mathsf{b} + 2)
  + (16 \cdot |r|_\mathsf{b} + 4).
 \end{align*}
\end{proposition}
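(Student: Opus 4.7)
The plan is to leverage the weak simulation of Theorem~\ref{thm:Simulation} to translate the already-established SAM bounds of Lemma~\ref{lem:SAMBounds} into bounds for the DGoIM{\rewritesfirst}. The sequence $r$ and the execution $e : (\overline{t_0},\emptyCtxt)_\mathit{term} \to^* (\overline{t},E)$ both start and end at $\preceq$-related states. Because the SAM and the DGoIM{\rewritesfirst} are both deterministic (up to the choice of fresh names), and because each SAM transition is simulated by a \emph{fixed} block of DGoIM transitions specified in Theorem~\ref{thm:Simulation}, the sequence $r$ must, up to renaming, be the concatenation of the simulating blocks associated with each transition of $e$. This observation is the key reduction; establishing it rigorously would proceed by induction on the length of $e$, using the forward-propagation of $\preceq$ in each of the three cases and the fact that at a state $s'$ with $(\overline{v},A)_\mathit{ctxt} \preceq s'$ no DGoIM transition is possible (so $r$ cannot ``overshoot'' $e$).

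With the correspondence in hand, the counting is direct. Reading off the block sizes from Theorem~\ref{thm:Simulation}, each $\to_\mathsf{b}$ contributes one $\mathsf{b}$- and three $\mathsf{o}$-transitions, each $\to_\mathsf{s}$ contributes one $\mathsf{s}$- and one $\mathsf{o}$-transition, and each $\to_\mathsf{o}$ contributes at most four $\mathsf{o}$-transitions. Hence
\begin{align*}
  |r|_\mathsf{b} &= |e|_\mathsf{b}, &
  |r|_\mathsf{s} &= |e|_\mathsf{s}, &
  |r|_\mathsf{o} &\leq 3\,|e|_\mathsf{b} + |e|_\mathsf{s} + 4\,|e|_\mathsf{o}.
\end{align*}
The first inequality of the proposition is then immediate from $|e|_\mathsf{s} \leq |e|_\mathsf{b}$ in Lemma~\ref{lem:SAMBounds}. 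For the second, substitute the two inequalities of Lemma~\ref{lem:SAMBounds}:
\begin{align*}
  |r|_\mathsf{o}
  &\leq 3\,|r|_\mathsf{b} + |r|_\mathsf{b}
  + 4\bigl(|\overline{t_0}|\cdot(5\,|r|_\mathsf{b}+2) + (3\,|r|_\mathsf{b}+1)\bigr) \\
  &= 4\,|\overline{t_0}|\cdot(5\,|r|_\mathsf{b}+2) + 16\,|r|_\mathsf{b} + 4,
\end{align*}
which is the claimed bound.

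The routine algebra is essentially the only computation. The real subtlety, and the step I expect to require most care, is justifying that the DGoIM sequence $r$ factors exactly into the per-transition blocks prescribed by Theorem~\ref{thm:Simulation}. The weak simulation by itself only guarantees the \emph{existence} of such a factorisation for some DGoIM sequence from $s_0$; to use it as an \emph{equality} of counts for the given $r$, one must combine it with the determinism of $\rightarrowtriangle$ (up to new-name choice) and with the fact that $\preceq$-related ``ctxt'' configurations with a substitution context yield stuck DGoIM states. Once that determinism argument is made precise by induction on $|e|$, the quantitative bounds fall out mechanically as above.
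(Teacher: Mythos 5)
Your proof is correct and takes essentially the same route as the paper, whose own proof is just the one-line remark that the proposition is ``a direct consequence of Lem.~\ref{lem:SAMBounds} and Thm.~\ref{thm:Simulation}''; your block counts ($|r|_\mathsf{b}=|e|_\mathsf{b}$, $|r|_\mathsf{s}=|e|_\mathsf{s}$, $|r|_\mathsf{o}\leq 3\,|e|_\mathsf{b}+|e|_\mathsf{s}+4\,|e|_\mathsf{o}$) and the subsequent algebra reproduce the stated bounds exactly. The factorisation-via-determinism issue you flag is indeed left implicit in the paper, so you are if anything more careful than the original.
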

\begin{proof}
 This is a direct consequence of Lem.~\ref{lem:SAMBounds} and
 Thm.~\ref{thm:Simulation}.
\end{proof}

\subsection{Execution Time Cost}

We already discussed time cost of single transitions of the DGoIM in
Sec.~\ref{sec:TimeCost}.
It is worth noting that the discussion in Sec.~\ref{sec:TimeCost} is
independent of any particular choice of a rewriting and token-passing interleaving strategy. 

Thm.~\ref{thm:TimeCost} below gives a bound of execution time cost of
the DGoIM{\rewritesfirst}.
We can conclude that, according to Accattoli's taxonomy (see
Def.~\ref{def:taxonomy}), the DGoIM{\rewritesfirst} is ``efficient''
as an abstract machine for the call-by-need evaluation.
\begin{theorem}[time cost]
 \label{thm:TimeCost}
 Let $C,D$ be fixed natural numbers, and
 $r \colon s_0 \rightarrowtriangle^* s$ be a sequence of transitions
 of the DGoIM{\rewritesfirst}.
 If there exists an execution
 $(\overline{t_0},\emptyCtxt)_\mathit{term} \to^* (\overline{t},E)$ of
 the SAM such that
 $s_0 \preceq (\overline{t_0},\emptyCtxt)_\mathit{term}$ and
 $s \preceq (\overline{t},E)$,
 the total time cost $T(r)$ of the sequence $r$ satisfies:
 \begin{align*}
  T(r)
  &= \mathcal{O}((|\overline{t_0}|+C)\cdot(|r|_\mathsf{b}+D)).
 \end{align*}
\end{theorem}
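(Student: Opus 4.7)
The plan is to decompose the total cost $T(r)$ according to the kind of transition used, bound the per-transition cost of each kind using the remarks of Sec.~\ref{sec:TimeCost}, and then apply Prop.~\ref{prop:DGoIMBounds} to convert the resulting transition counts into a bound expressed in $|\overline{t_0}|$ and $|r|_\mathsf{b}$.

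First, I would partition the transitions occurring in $r$ into two groups: the \emph{constant-cost} ones, namely all pass transitions and the rewrite transitions (\ref{RW:Ax1}), (\ref{RW:Ax2}), (\ref{RW:MultL}), (\ref{RW:MultR}) and (\ref{RW:COne}); and the \emph{non-constant-cost} rewrites (\ref{RW:D}) and (\ref{RW:CPos}). The total contribution of the constant-cost transitions to $T(r)$ is bounded by a constant multiple of $|r| = |r|_\mathsf{b} + |r|_\mathsf{s} + |r|_\mathsf{o}$, which by Prop.~\ref{prop:DGoIMBounds} is $\mathcal{O}(|\overline{t_0}|\cdot|r|_\mathsf{b})$, up to additive constants that may be absorbed into $C$ and $D$.

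Next, I would bound the cost of (\ref{RW:CPos}) using the observation made in Sec.~\ref{sec:TimeCost} that every $\oc$-box ever copied along $r$ is a sub-graph of the initial named graph, whose number of nodes is $\mathcal{O}(|\overline{t_0}|)$. Hence a single (\ref{RW:CPos}) transition costs $\mathcal{O}(|\overline{t_0}|)$. Since (\ref{RW:CPos}) transitions are counted in $|r|_\mathsf{s}$, and $|r|_\mathsf{s}\leq|r|_\mathsf{b}$ by Prop.~\ref{prop:DGoIMBounds}, the aggregate contribution of (\ref{RW:CPos}) is $\mathcal{O}(|\overline{t_0}|\cdot|r|_\mathsf{b})$.

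The main obstacle is (\ref{RW:D}), because a naive product of the per-step cost $\mathcal{O}(|\overline{t_0}|)$ and the count bound $|r|_\mathsf{o} = \mathcal{O}(|\overline{t_0}|\cdot|r|_\mathsf{b})$ would give only a quadratic dependence on $|\overline{t_0}|$. To recover the claimed linear bound I would argue by amortisation on $\wn$-doors. Each (\ref{RW:D}) destroys the auxiliary $\wn$-doors of some $\oc$-box, so the combined work of all (\ref{RW:D}) transitions over the run is bounded by the total number of $\wn$-doors ever present in the graph. The initial graph accounts for $\mathcal{O}(|\overline{t_0}|)$ such doors; the only transition that introduces new ones is (\ref{RW:CPos}), and each such occurrence introduces at most $\mathcal{O}(|\overline{t_0}|)$ of them (again by the sub-graph property from Sec.~\ref{sec:TimeCost}). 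Hence the total cost of all (\ref{RW:D}) transitions together is $\mathcal{O}(|\overline{t_0}|\cdot(1+|r|_\mathsf{s})) = \mathcal{O}(|\overline{t_0}|\cdot(|r|_\mathsf{b}+1))$. Summing the three contributions and choosing $C,D$ to absorb the additive constants coming from Prop.~\ref{prop:DGoIMBounds} yields $T(r) = \mathcal{O}((|\overline{t_0}|+C)\cdot(|r|_\mathsf{b}+D))$.
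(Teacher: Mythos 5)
Your proposal is correct, and its overall skeleton (constant-cost transitions bounded via Prop.~\ref{prop:DGoIMBounds}, the two expensive rewrites (\ref{RW:D}) and (\ref{RW:CPos}) handled separately with a per-box size bound of $\mathcal{O}(|\overline{t_0}|)$) matches the paper. Where you genuinely diverge is in the accounting for (\ref{RW:D}). The paper bounds the \emph{per-occurrence} cost of both (\ref{RW:D}) and (\ref{RW:CPos}) by $\mathcal{O}(|\overline{t_0}|)$ --- arguing via the weak simulation that every $\oc$-box in play is the translation $\overline{v}^\dag$ of a value $\overline{v}$ that is a sub-term of $\overline{t_0}$ (Lem.~\ref{lem:SAMInvariants}) --- and then ``combines with Prop.~\ref{prop:DGoIMBounds}''; the step it leaves implicit is that (\ref{RW:D}) occurs only $\mathcal{O}(|r|_\mathsf{b})$ times, which one reads off from Thm.~\ref{thm:Simulation} and Fig.~\ref{fig:SimulationIllustrated} (a single (\ref{RW:D}) appears in the simulation of each SAM $\beta$-transition, and nowhere else). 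You correctly observe that without such a count the naive product $|r|_\mathsf{o}\cdot\mathcal{O}(|\overline{t_0}|)$ is only quadratic in $|\overline{t_0}|$, and you close the gap by a different, amortised argument: charge each deleted $\wn$-door to its creation, so the aggregate door-deletion work is bounded by the $\mathcal{O}(|\overline{t_0}|)$ doors of the initial graph plus the $\mathcal{O}(|\overline{t_0}|)$ doors introduced by each of the at most $|r|_\mathsf{s}\leq|r|_\mathsf{b}$ occurrences of (\ref{RW:CPos}). This is sound (each door is created once and deleted at most once; the $\mathcal{O}(1)$ base cost of each (\ref{RW:D}) is absorbed into your first group), and it buys independence from the fine structure of the simulation --- you never need to know which SAM transitions trigger (\ref{RW:D}) --- at the price of introducing a potential-function argument the paper avoids. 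One small point worth making explicit if you write this up: your per-copy bound for (\ref{RW:CPos}) uses the observation from Sec.~\ref{sec:TimeCost} that copied boxes are sub-graphs of the \emph{initial} graph, which holds because rewrites never create or enlarge $\oc$-boxes; the paper reaches the same $\mathcal{O}(|\overline{t_0}|)$ bound through the translation of values instead. Both are valid.
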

\begin{proof}
 We estimated in Sec.~\ref{sec:TimeCost} that time cost of single
 transitions, except for the rewrite transitions (\ref{RW:D}) and
 (\ref{RW:CPos}), is constant.
 Time cost of these rewrite transitions (\ref{RW:D}) and
 (\ref{RW:CPos}) depends on the number of doors and/or nodes of a
 $\oc$-box.

 Since the sequence $r$ of transitions simulates an execution of the
 SAM, every $\oc$-box concerned in $r$ arises as the translation of a
 value $\overline{v}$.
 By definition of the translation $(\cdot)^\dag$
 (Fig.~\ref{fig:TranslationTerms}), the graph
 $\overline{v}^\dag$ is a $\oc$-box, with as many auxiliary doors as
 occurrences of free variables in $\overline{v}$.
 The number of auxiliary doors is no more than the number of nodes in
 the $\oc$-box, due to the well-boxed-ness condition, that is linear in
 the size $|\overline{v}|$ of the value.

 Moreover the value $\overline{v}$ appears as the first component of a
 context configuration, and therefore it is a sub-term of the initial
 term $\overline{t_0}$ (Lem.~\ref{lem:SAMInvariants}).
 As a result, time cost of each occurrence of the rewrite
 transitions (\ref{RW:D}) and (\ref{RW:CPos}) in the sequence $r$ is
 linear in the size $\overline{t_0}$.

 The bound of the total time cost $T(r)$ of the sequence $r$ is given
 by combining these estimations for single transitions with the
 results of Prop.~\ref{prop:DGoIMBounds}.
\end{proof}
\begin{corollary}
 The DGoIM{\rewritesfirst} is an efficient abstract machine, in the
 sense of Def.~\ref{def:taxonomy}.
\end{corollary}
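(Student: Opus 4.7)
The plan is straightforward: derive the corollary directly from Theorem~\ref{thm:TimeCost}. First I would invoke the bound $T(r)=\mathcal{O}((|\overline{t_0}|+C)\cdot(|r|_\mathsf{b}+D))$ established there and expand the product as $\mathcal{O}(|\overline{t_0}|\cdot|r|_\mathsf{b}) + \mathcal{O}(|\overline{t_0}|) + \mathcal{O}(|r|_\mathsf{b}) + \mathcal{O}(1)$. Since $C$ and $D$ are fixed constants, the dominant contribution is the bilinear leading term in $|\overline{t_0}|$ and $|r|_\mathsf{b}$.

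Next I would match this shape against Def.~\ref{def:taxonomy}(1), which calls an abstract machine \emph{efficient} when its execution time cost is linear in both the input size and the number of $\beta$-transitions. Following Accattoli's reading, ``linear in both'' means the two parameters combine multiplicatively, so a bound of the shape $\mathcal{O}(|\overline{t_0}|\cdot|r|_\mathsf{b})$ suffices: fixing $|\overline{t_0}|$, the cost is linear in $|r|_\mathsf{b}$, and fixing $|r|_\mathsf{b}$, the cost is linear in $|\overline{t_0}|$. I would also emphasise that the bound hides no super-linear dependency on either quantity alone---in particular no blow-up from repeated $\oc$-box copying---thanks to the key observation already used inside Thm.~\ref{thm:TimeCost}, namely that every $\oc$-box arising along the execution translates a sub-term of $\overline{t_0}$ and so has size $\mathcal{O}(|\overline{t_0}|)$.

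The one verification I would carry out to close the proof is that Thm.~\ref{thm:TimeCost} is applicable to every DGoIM\rewritesfirst{} execution starting from a state $s_0$ corresponding to an initial SAM configuration $(\overline{t_0},\emptyCtxt)_\mathit{term}$: i.e.\ that such a DGoIM\rewritesfirst{} execution always admits a matching SAM execution in the sense of the theorem's hypothesis. This follows by a routine induction on the length of the DGoIM\rewritesfirst{} sequence using the clauses of Thm.~\ref{thm:Simulation}, together with the preservation of $\preceq$ and determinism up to fresh name choice. I do not anticipate any genuine obstacle beyond this bookkeeping, since all the real work---the quantitative bounds of Prop.~\ref{prop:DGoIMBounds} and the constant-time-per-transition analysis of Sec.~\ref{sec:TimeCost}---has already been done upstream.
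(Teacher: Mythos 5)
Your proposal is correct and matches the paper's (implicit) argument: the paper gives no separate proof for this corollary, treating it as an immediate reading-off of the bound in Thm.~\ref{thm:TimeCost} against Def.~\ref{def:taxonomy}(1), exactly as you do. Your extra remarks --- that ``linear in both'' is the bilinear product form, and that one should check the hypothesis of Thm.~\ref{thm:TimeCost} is met by executions from initial states via Thm.~\ref{thm:Simulation} --- are sound bookkeeping that the paper leaves tacit.
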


\section{Conclusions}

We introduced the DGoIM, which can interleave token passing with graph
rewriting informed by the trajectory of the token.
We focused on the rewrites-first interleaving and proved that it
enables the DGoIM to implement the call-by-need evaluation strategy.
The quantitative analysis of time cost certified that the
DGoIM{\rewritesfirst} gives an ``efficient'' implementation in the
sense of Accattoli's classification.
The proof of Thm.~\ref{thm:TimeCost} pointed out that eliminating and
copying $\oc$-boxes are two main sources of time cost.
Our results are built on top of a weak simulation of the SAM, that
relates several transitions of the DGoIM to each computational task
such as $\beta$-reduction and substitution.

The main feature of the DGoIM is the flexible combination of
interaction and rewriting.
We here briefly discuss how the flexibility can enable the DGoIM to
implement evaluation strategies other than the call-by-need.

As mentioned in Sec.~\ref{sec:IntroductionInterleaving}, the
passes-only interleaving yields an ordinary token-passing abstract
machine that is known to implement the call-by-name evaluation.
Because no rewrites are triggered, as oppose to the rewrites-first
interleaving, a token not only can pass a principal door but also can
go inside a $\oc$-box and pass an auxiliary door.
These behaviours are in fact not possible with the DGoIM presented in
this paper; the transitions and data carried by a token are
tailored to the rewrites-first interleaving.
To recover an ordinary token-passing machine, we therefore need to add
pass transitions that involve auxiliary doors and data structures
(so-called ``exponential signatures'') that deal with $\oc$-boxes, for
example.

The only difference between the call-by-need and the call-by-value
evaluations lies in when function arguments are evaluated.
In the DGoIM, this corresponds to changing a trajectory of a token so
that it visits function arguments immediately after it detects
function application.
Therefore, to implement the call-by-value evaluation, the DGoIM can
still use the rewrites-first interleaving, but it should use a
modified set of pass transitions. Further refinements, not only of the evaluation strategies but also of the graph representation could yield even more efficient implementation, such as \textit{full lazy evaluation}, as hinted in~\cite{Sinot05}.

Our final remarks concern programming features that have been
modelled using token-passing abstract machines.
Ground-type constants are handled by attaching memories to either
nodes of a graph or a token, in e.g.\
\cite{Mackie95,HoshinoMH14,DalLagoFVY15} ---
this can be seen as a simple form of graph rewriting.
Algebraic effects are also accommodated using memories attached to
nodes of a graph in token machines~\cite{HoshinoMH14}, but their treatment would be much simplified in the DGoIM as effects are evaluated out of the term via rewriting. 

\subparagraph*{Acknowledgements.}

We are grateful to Ugo Dal Lago and anonymous reviewers for
encouraging and insightful comments on earlier versions of this
work.

\bibliographystyle{plain}

\end{document}